\newtheorem{theorem}{Theorem}
\newtheorem{proposition}{Proposition}
\newtheorem{assumpB}{Assumption}
\newtheorem{assumpC}{Assumption}
\newtheorem{corollary}{Corollary}
\renewcommand{\thefootnote}{\fnsymbol{footnote}}
\newcommand{\ignore}[1]{}
\newcommand{\beq}{\begin{equation}}
\newcommand{\eeq}{\end{equation}}
\newcommand{\bea}{\begin{eqnarray}}
\newcommand{\eea}{\end{eqnarray}}
\newcommand{\bit}{\begin{itemize}}
\newcommand{\eit}{\end{itemize}}
\newcommand{\ben}{\begin{enumerate}}
\newcommand{\een}{\end{enumerate}}
\newcommand{\bpm}{\begin{pmatrix}}
\newcommand{\epm}{\end{pmatrix}}
\newcommand{\bbm}{\begin{bmatrix}}
\newcommand{\ebm}{\end{bmatrix}}
\titleformat{\section}{\Large\bfseries}{\thesection}{1em}{}
\font\myfont=cmr12 at 16pt
\font\myfontb=cmr12 at 10pt
\begin{document}

\title{\textbf{\myfont Vector or Matrix Factor Model? A Strong Rule Helps!}}
\author{ \myfontb Yong He\thanks{%
Shandong University, Email:heyong@sdu.edu.cn}, Xin-Bing Kong\thanks{%
Nanjing Audit University, Email:xinbingkong@126.com}, Lorenzo Trapani
\thanks{%
University of Nottingham, Email:Lorenzo.Trapani@nottingham.ac.uk}, Long Yu%
\thanks{%
National University of Singapore, Email:stayl@nus.edu.sg} }
\date{}
\maketitle

%EndAName
%\myfontb Shandong University, Nanjing Audit University,\\
%\myfontb University of Nottingham, National University of Singapore }

%Testing the number of factors in large matrix time series}\\

\begin{center}
\begin{minipage}{150mm}
			
			%\baselineskip=0.55 true cm
			
		\footnotesize{	{\it Abstract}:This paper investigates the issue of determining the dimensions of row and column factor spaces in matrix-valued data. Exploiting the eigen-gap in the spectrum of sample second moment matrices of the data, we propose a family of randomised tests to check whether a one-way or two-way factor structure exists or not. Our tests do not require any arbitrary thresholding on the eigenvalues, and can be applied with no restrictions on the relative rate of divergence of the cross-sections to the sample sizes as they pass to infinity. Although tests are based on a randomization which does not vanish asymptotically, we propose a de-randomized, \textquotedblleft strong\textquotedblright\ (based on the Law of the Iterated Logarithm) decision rule to choose in favour or against the presence of common factors. We use the proposed tests and decision rule in two ways. We further cast our individual tests in a sequential procedure whose output is an estimate of the number of common factors. Our tests are built on two variants of the sample second moment matrix of the data: one based on a row (or column) \textquotedblleft flattened\textquotedblright\ version of the matrix-valued sequence, and one based on a projection-based method. Our simulations show that both procedures work well in large samples and, in small samples, the one based on the projection method delivers a superior performance compared to existing methods in virtually all cases considered.

			\bigskip
			
			{\it Key words and phrases}: Matrix sequence; Matrix factor model; Principal component analysis; Projection Estimation; Randomised tests.
			
			\noindent {\small{\it JEL classification}: C23; C33;  C38; C55. }	}		
		\end{minipage}
\end{center}

%\baselineskip=0.70 true cm

%\baselineskip=0.50 true cm
\doublespacing
\renewcommand*{\thefootnote}{\arabic{footnote}}

\section{Introduction}

% Matrix time series with factor structure

Matrix time series can be defined as a sequence of $p_{1}\times p_{2}$\
random matrices $\left\{ X_{t},1\leq t\leq T\right\} $, with each random
matrix used to model observations that are well structured to be an array.
Such datasets are of great interest in a wide variety of applied sciences in
general, and in social sciences in particular. For example, in
macroeconomics a \textquotedblleft classical\textquotedblright\ application
of matrix-valued time series (see the recent paper by %
\citealp{chen2021factor}, and the discussion therein) is modelling the
import-export volumes between countries for one product family such as e.g.
chemical, food, or machinery and electronic. In this example, also known as
a \textquotedblleft dynamic transport network\textquotedblright , at each
point in time one can construct a matrix where the columns represent imports
into a country and the rows exports towards a country (with the main
diagonal of course empty). Another possible example, studied in %
\citet{wang2019factor}, is a matrix of time series whose rows contain some
macroeconomic indicator (GDP, inflation, interest rates...) and whose
columns represent different countries. Further, in the context of financial
data, \citet{wang2019factor} study a matrix-valued time-series of portfolio
returns where each portfolio is identified by a size level and by a BE ratio
level; in the same paper, another example is provided, conceptually similar
to the one based on macroeconomic indicators described above, where a
matrix-valued time series is considered with different companies on each
row, and different company financials on each column. Finally, in marketing
studies, a very promising application is to time series of customers'
ratings on a large number of items in an online platform; a well-known
application of such a \textquotedblleft recommender
system\textquotedblright\ (\citealp{koren})\ is where, as time elapses,
several customers are asked to express their level of satisfaction with
several movies/TV shows. We also refer to the papers by \citet{fan2021} and %
\citet{Gao2021A} for further discussion and examples ranging from health
sciences (such as electronic health records and ICU data), to 2-D image data
processing.

When dealing with such complex datasets, exploring the possibility of
dimensionality reduction is of pivotal importance. A possible way of
achieving this is to \textquotedblleft flatten\textquotedblright\ the data,
and model the vectorised sequence of matrices as%
\begin{equation}
\underset{p_{1}p_{2}\times 1}{\text{Vec}\left( X_{t}\right) }=\underset{%
p_{1}p_{2}\times k}{\Lambda }\underset{k\times 1}{f_{t}}+\underset{%
p_{1}p_{2}\times 1}{u_{t}},  \label{bai03}
\end{equation}%
where $f_{t}$ is a (low-dimensional) vector of common factors. Such a
modelling strategy has been studied, in the context of vector-valued series,
in numerous contributions, and we refer to \citet{bai2016econometric} for a
comprehensive review. Although (\ref{bai03}) does lead to dimension
reduction, further refinements may still be desirable. On the one hand, (\ref%
{bai03}) requires the estimation of $p_{1}p_{2}k$ parameters in the loading
matrix $\Lambda $; this number may still be too large in empirical
applications, in particular when the cross-sectional dimensions $p_{1}$ and $%
p_{2}$ are large. On the other hand, given that $X_{t}$ is a sequence of
matrices, a better modelling approach could be based on allowing for the
presence of common factors along the rows and along the columns of $X_{t}$,
rather than destroying the matrix nature of the data by vectorising it.

\subsection{The two-way factor model for matrix-valued time series\label%
{sub21}}

In order to make full use of the matrix structure, a parsimonious modelling
approach has been proposed in a recent, seminal paper by %
\citet{wang2019factor}, who assume that $X_{t}$ is driven by a
low-dimensional set of common factors across the row and column dimensions:
\begin{equation}
\underset{p_{1}\times p_{2}}{X_{t}}=\underset{p_{1}\times k_{1}}{R}\underset{%
k_{1}\times k_{2}}{F_{t}}\underset{k_{2}\times p_{2}}{C^{\prime }}+\underset{%
p_{1}\times p_{2}}{E_{t}},\ 1\leq t\leq T,\ k_{1},\ k_{2}>0.  \label{fm}
\end{equation}%
In (\ref{fm}), $R$ is the $p_{1}\times k_{1}$ row factor loading matrix
explaining the variation of $X_{t}$ across the rows, $C$ is the $p_{2}\times
k_{2}$ column factor loading matrix reflecting the differences across the
columns of $X_{t}$, $F_{t}$ is the common factor matrix, and $E_{t}$ is an
idiosyncratic component. At a glance, a natural competitor of (\ref{fm})
could be a group (vector) factor model (see e.g. \citealp{ando}, and %
\citealp{mircorubin}): in such a class of models, there is only one
cross-section, and this one cross-section contains variables of the same
nature (say, considering an example above, the set of macroeconomic
indicators) which are well-grouped with known or unknown group membership.
The common factors are organised into groups, and the interrelations within
and between groups are characterised by such factors. Conversely, the data $%
X_{t}$ in (\ref{fm}) are genuinely matrix-valued, with two cross-sectional
dimensions of different nature (considering the examples mentioned above,
these could be countries and macroeconomic indicators in the context of
macroeconomic data; or customers and commodities in recommending systems).
Hence, the common components in the matrix factor models reflect the
interplay between the two different cross-sections: for example, in the
context of recommending systems, ratings are high whenever the purchasers'
consumption preferences{\ (rows in $R$)} match the underlying
characteristics of items displayed online{\ (rows in $C$), thus (\ref{fm})
is a natural modeling of the interactive effect between the row and column
cross sections.} In this context, it is natural to expect that (\ref{fm}),
which takes the matrix nature of the data into account, is a better approach
than using models based on vectorising $X_{t}$, where the presence of groups
arises from artificially stacking the columns (rows) of matrix-valued data.
Moreover, (\ref{fm}) has the added bonus of reducing the dimensionality
compared to a model like (\ref{bai03}): whilst in the latter case one needs
to estimate $p_{1}p_{2}k$ coefficients, in the case of (\ref{fm}) such
parameter complexity is reduced to $p_{1}k_{1}+p_{2}k_{2}$. This can be
viewed even more neatly if one considers the following alternative version
of (\ref{fm})%
\begin{equation}
\text{Vec}\left( X_{t}\right) =\left( C\otimes R\right) \text{Vec}\left(
F_{t}\right) +\text{Vec}\left( E_{t}\right) ,  \label{kron}
\end{equation}%
where \textquotedblleft $\otimes $\textquotedblright\ denotes the Kronecker
product. Equation (\ref{kron}) shows that the loadings associated with the
factor structure in $X_{t}$ satisfy a Kronecker product structure, whence
the higher parsimony of (\ref{fm}).

\bigskip

In order to better understand the nature of (\ref{kron}), we consider the
following example, where - as also mentioned above - $X_{t}$ represents a $%
p_{1}\times p_{2}$ time series whose colums contain some macroeconomic
indicators (GDP, inflation, interest rates...), and whose rows represent
different countries.\footnote{%
In Section \ref{example} of the Supplementary Material, we also discuss
another example, based on a similar discussion in {\cite{wang2019factor}},
which also illustrates the relationship between (\ref{kron}) and a
multilevel factor model.} {Consider the following notation: $R=(\alpha
_{p_{1}\times r},\widetilde{R}_{p_{1}\times (k_{1}-r-l)},\mathbf{1}%
_{p_{1}\times l})$, $C=(\mathbf{1}_{p_{2}\times r},\widetilde{C}%
_{p_{2}\times (k_{2}-r-l)},\beta _{p_{2}\times l})$ and $F_{t}=\mbox{diag}%
\{(g_{t})_{r\times r},(\widetilde{F}_{t})_{(k_{1}-r-l)\times
(k_{2}-r-l)},(h_{t})_{l\times l}\}$. Then, (\ref{kron}) becomes%
\begin{equation}
X_{t}=\alpha g_{t}\mathbf{1}_{r\times p_{2}}+\mathbf{1}_{p_{1}\times
l}h_{t}\beta ^{\prime }+\widetilde{R}\widetilde{F}_{t}\widetilde{C}^{\prime
}+E_{t}.  \label{ex1}
\end{equation}%
In (\ref{ex1}), $g_{t}$ and $h_{t}$ are the common factors along the row and
column cross-sections, respectively, and $\alpha $ and $\beta $ represent
their loadings; }$\widetilde{R}\widetilde{F}_{t}\widetilde{C}^{\prime }${\
may be viewed as an interaction effect component. Model (\ref{ex1}) can be
rewritten in vector form, for both the countries $j=1,...,p_{1}$ and the
indicators $i=1,...,p_{2}$:
\begin{eqnarray}
X_{j\cdot ,t} &=&\alpha _{j\cdot }g_{t}\mathbf{1}_{r\times p_{2}}+\mathbf{1}%
_{1\times l}h_{t}\beta ^{\prime }+\widetilde{R}_{j\cdot }\widetilde{F}_{t}%
\widetilde{C}^{\prime }+E_{j\cdot ,t},  \label{ex1-row} \\
X_{\cdot i,t} &=&\alpha g_{t}\mathbf{1}_{r\times 1}+\mathbf{1}_{p_{1}\times
l}h_{t}\beta _{\cdot i}^{\prime }+\widetilde{R}\widetilde{F}_{t}\widetilde{C}%
_{\cdot i}^{\prime }+E_{\cdot i,t}.  \label{ex1-column}
\end{eqnarray}%
Equations (\ref{ex1-row}) and (\ref{ex1-column}) lend themselves to the
following interpretation. }The term $g_{t}$ represents the common global
factors affecting all countries - but the rows of $\alpha $ are
heterogeneous, indicating that, for each country, the data have specific
loadings on the global factors. Similarly, $h_{t}$ are latent common factors
reflecting economic states across macroeconomic indicators - but the rows
of $\beta $ are heterogeneous, indicating that each macroeconomic indicator
loads on the states differently. Considering the rows $X_{j\cdot ,t}$ and
model (\ref{ex1-row}), the term $\alpha _{j\cdot }g_{t}\mathbf{1}_{r\times
p_{2}}$ is therefore a global factor term, while the term $\mathbf{1}%
_{1\times l}h_{t}\beta ^{\prime }$ is a columnwise adjusted global factor
term. Looking at the columns $X_{\cdot i,t}$, i.e. (\ref{ex1-column}), the
term $\mathbf{1}_{p_{1}\times l}h_{t}\beta _{\cdot i}^{\prime }$ contains
economic state factors invariant across all indicators, while the term $%
\alpha g_{t}\mathbf{1}_{r\times 1}$ is a rowwise adjusted economic state
factor term. The third terms in both (\ref{ex1-row}) and (\ref{ex1-column})
reflect the interaction effect between the two cross-sections. Hence, the
matrix factor model incorporates \textsc{simultaneously} the geographical
global factors common to the countries, and the economic state factors
common to the indicators.

Equation (\ref{ex1}) nests several interesting special cases. Indeed, when $%
r=l=1\ $and $g_{t}=h_{t}=1$, (\ref{ex1}) boils down to
\begin{equation}
X_{t}=\alpha _{p_{1}\times 1}\mathbf{1}_{1\times p_{2}}+\mathbf{1}%
_{p_{1}\times 1}\beta _{1\times p_{2}}+\widetilde{R}\widetilde{F}_{t}%
\widetilde{C}^{\prime }+E_{t},  \label{ex2}
\end{equation}%
which is a model with time-invariant, fixed effects along both the row and
column dimensions.\footnote{%
See \citet{kong2022matrix}, where this model is studied.} Such fixed effects
are allowed to be heterogeneous across the rows and/or columns, representing
the specific effects (factors) of rows (countries) and columns (indicators);
using the notation $\alpha _{p_{1}\times 1}=(r_{1},...,r_{p_{1}})^{\prime }$
and $\beta _{p_{2}\times 1}=(c_{1},...,c_{p_{2}})^{\prime }$, (\ref{ex2})
entails that%
\begin{equation}
X_{ji,t}=r_{j}+c_{i}+\widetilde{R}_{j\cdot }\widetilde{F}_{t}\widetilde{C}%
_{\cdot i}^{\prime }+E_{ji,t},  \label{ex1-scalar}
\end{equation}%
i.e., a model with \textquotedblleft two-way\textquotedblright\
cross-sectional fixed effects. Finally, combining (\ref{ex1}) and (\ref{ex2}%
) yields another special example of the matrix factor model
\begin{equation}
X_{t}=\underbrace{\alpha g_{t}\mathbf{1}_{r\times p_{2}}+\alpha
_{p_{1}\times 1}\mathbf{1}_{1\times p_{2}}}_{I}+\underbrace{\mathbf{1}%
_{p_{1}\times l}h_{t}\beta ^{\prime }+\mathbf{1}_{p_{1}\times 1}\beta
_{1\times p_{2}}}_{II}+\widetilde{R}\widetilde{F}_{t}\widetilde{C}^{\prime
}+E_{t},  \label{ex3}
\end{equation}%
i.e. a model with: fixed effects in both the row and column dimensions, two
sets of latent factors common to countries and indicators respectively, and
an interaction term.

As far as inference is concerned, under (\ref{kron}) the \textquotedblleft
loadings\textquotedblright\ $C\otimes R$ can be estimated by obtaining\ $%
\widehat{C}$ and $\widehat{R}$ separately, and subsequently computing\ $%
\widehat{C}\otimes \widehat{R}$. As pointed out in \citet{fan2021}, if one
were to estimate $C\otimes R$ by ignoring the Kronecker product structure
and using e.g. the standard PCA\ estimator studied in \citet{bai03}, the
convergence rate of $\widehat{C\otimes R}$ in $L_{2}$-norm would be $\min
\left\{ T^{-1/2},\left( p_{1}p_{2}\right) ^{-1/2}\right\} $. Conversely,
under (\ref{kron}), \citet{hkyz2021} show that the $L_{2}$-norm convergence
rates of $\widehat{C}$ and $\widehat{R}$ are, respectively%
\begin{equation*}
\min \left\{ \frac{1}{\sqrt{Tp_{1}}},\frac{1}{\min \left\{ p_{1},T\right\}
p_{2}}\right\} \text{ \ and \ }\min \left\{ \frac{1}{\sqrt{Tp_{2}}},\frac{1}{%
\min \left\{ p_{2},T\right\} p_{1}}\right\} .
\end{equation*}%
Hence, $\widehat{C}\otimes \widehat{R}$ has a faster rate of convergence
than $\widehat{C\otimes R}$ in the case of large dimensional datasets.
Further, as far as second order properties are concerned, an estimation
technique that makes full use of the dimensionality reduction implied by (%
\ref{kron}) is bound to result in efficiency gains. Finally, if the object
of interest are $C$ and $R$, direct estimation is going to be better (as
well as computationally more efficient) than firstly estimating $C\otimes R$
and subsequently recovering $C$ and $R$ therefrom via Kronecker product
decomposition (\citealp{cai2019}).

\bigskip

As mentioned above, the first contribution to consider a factor model with a
Kronecker product structure like (\ref{kron}) is \citet{wang2019factor}, who
propose estimators of the factor loading matrices (and of the numbers of the
row and column factors) based on the eigen-analysis of the
auto-cross-covariance matrix. From a different perspective, and assuming
cross-sectional pervasiveness along the row and column dimensions, %
\citet{fan2021} propose an estimation technique based on the eigen-analysis
of a weighted average of the mean and the column (row) covariance matrix of
the data; \citet{hkyz2021} improve the estimation efficiency of the factor
loading matrices with iterative projection algorithms. All these
methodologies can also be employed to construct estimators of the number of
common factors. In addition, there are also contributions which specifically
address the estimation of the dimensions of the factor spaces. In the
broader context of tensor factor models, \citet{han2020rank} propose two
approaches (one which is similar, in spirit, to the information criteria in %
\citet{baing02}, and one which is based on using the ratio of consecutive
eigenvalues) to determine the dimension of the factor spaces; %
\citet{lam2021rank} considers estimating the number of common factors by
thresholding the eigenvalues of the correlation matrix of the data (see also %
\citealp{chenlam}). Further extensions and applications of the basic set-up
in (\ref{fm}) include the constrained version by \citet{chen2019constrained}%
, the semiparametric estimators by \citet{chen2020semiparametric}, and the
estimators developed in \citet{chen2021factor}; see also \citet{han2020rank}%
. \citet{Chen2020Modeling} apply (\ref{fm}) to the dynamic transport network
in the context of international trade flows, and \citet{chen2020testing}
consider applications to financial datasets.

However, even though the literature has produced several contributions to
carry out inference in (\ref{fm}), no works has been done so far to
seriously test the existence of the factor structure implicitly defined in (%
\ref{fm}). Being able to discern whether a genuine matrix factor structure
exists or not is a crucial point in the analysis of matrix-valued data. As %
\citet{fan2021} put it, \textquotedblleft \lbrack ...] analyzing large scale
matrix-variate data is still in its infancy, and as a result, scientists
frequently analyze matrix-variate observations by separately modeling each
dimension or `flattening' them into vectors. This destroys the intrinsic
multi-dimensional structure and misses important patterns in such large
scale data with complex structures, and thus leads to sub-optimal
results\textquotedblright .

\subsection{Hypotheses of interest and the contribution of this paper\label%
{contrib}}

In (\ref{fm}), both $k_{1}$ and $k_{2}$ are strictly positive, thus allowing
for a collaborative dependence between the row cross-section and the column
cross-section: we name this \textit{two-way factor structure}. Since we
interpret $k_{1}$ and $k_{2}$ as the numbers of row and column factors, we
let $k_{1}=0$ and $k_{2}=0$ correspond to the scenarios without row factors
and without column factors, respectively. When $k_{2}=0$ but $k_{1}>0$, we
refer to this as having a \textit{one-way factor structure} along the row
dimension: all columns of the whole matrix sequence could be modeled by a $%
p_{1}$ dimensional vector factor model with effective sample size $Tp_{2}$.
A similar interpretation applies to the scenario where $k_{1}=0$ and $%
k_{2}>0 $. Finally, when $k_{1}=k_{2}=0$, the matrix-valued data is simply a
noise matrix.

In order to model the \textquotedblleft boundary\textquotedblright\ cases
discussed above, henceforth, we use the following \textsc{convention}
\begin{equation}
\underset{p_{1}\times p_{2}}{X_{t}}=\left\{
\begin{array}{ll}
\underset{p_{1}\times k_{1}}{R}\underset{k_{1}\times p_{2}}{F_{t}}+\underset{%
p_{1}\times p_{2}}{E_{t}}, & k_{1}>0,k_{2}=0, \\
\underset{p_{1}\times k_{2}}{F_{t}}\underset{k_{2}\times p_{2}}{C^{\prime }}+%
\underset{p_{1}\times p_{2}}{E_{t}}, & k_{2}>0,k_{1}=0, \\
\underset{p_{1}\times p_{2}}{E_{t}}, & k_{1}=k_{2}=0,%
\end{array}%
\right.  \label{fm1}
\end{equation}%
where the first case refers to a one-way factor model along the row
dimension (all columns form a vector factor model), the second case is a
one-way factor model along the column dimension (all rows form a vector
factor model), and the third case means absence of any factor structure. We
note that - since a factor structure is well-defined only if the dimension
of the factor space is finite - in (\ref{fm1}) we prefer to avoid the
notation $k_{2}=p_{2}\ $and$\ C=I_{p_{2}}$ (resp. $k_{1}=p_{1}\ $and$\
R=I_{p_{1}}$), even though it is mathematically equivalent to the first
(resp. the second) case in (\ref{fm1}).

\bigskip

In the context of (\ref{fm}) and (\ref{fm1}), several questions naturally
arise: \textit{is there a common, latent factor structure in the rows and/or
columns of }$X_{t}$\textit{?} \textit{How many row and/or column factors are
there?} Considering the macroeconomic example discussed above, this entails
checking the existence of country and/or index factors, and determining
their numbers. In this contribution, we propose a test to verify whether a
(one-way or two-way) matrix factor structure exists or not. To the best of
our knowledge, this is the first work with a hypothesis testing procedure to
discern between a genuine two-way matrix factor model (i.e. (\ref{fm})), a
one-way matrix factor structure (i.e. the first two cases of (\ref{fm1})),
or no factors at all (i.e. the last case of (\ref{fm1})). Our procedures
serve as a model checking tool to draw practical implications, e.g. on the
estimation technique to be employed.

Formally, we develop tests for the following general hypotheses:
\begin{equation}
H_{i0}:k_{i}\geq k_{i}^{0},\ \ \text{v. s.}\ \ H_{i1}:k_{i}<k_{i}^{0},\ \
i=1,2,  \label{Q1}
\end{equation}%
where $k_{1}^{0}$ and $k_{2}^{0}$ are the hypothesized numbers of row and
column factors, respectively. Our tests exploit the eigen-gap property of
the second moment matrix of the matrix series: we show that if there are $%
k_{i}^{0}$ common row (or column) factors, then the largest $k_{i}^{0}$
eigenvalues diverge \textsc{almost surely}, as the matrix dimensions
increase, at a faster rate than the remaining ones. To the best of our
knowledge, for the first time in the literature of matrix factor analysis,
this paper obtains an almost-sure (not just in probability) diverging lower
bound of the largest $k_{i}^{0}$ eigenvalues of the column (or row)
covariance matrix with and without projection, and an almost-sure upper
bound of the remaining eigenvalues. We then exploit the almost-sure
eigen-gap, thereby constructing a randomised test in a similar manner to %
\citet{trapani2018randomized}. In order to avoid the non-reproducibility
issue of randomised tests, we propose a \textquotedblleft
strong\textquotedblright\ rule to decide between $H_{i0}$ and $H_{i1}$,
inspired by the Law of the Iterated Logarithm.

Our approach has several desirable features. First, it is based on testing,
and therefore it does not suffer from the arbitrariness in thresholding the
eigenvalues, which is typical of information criteria. Second, it can also
be used to test for $H_{i0}:k_{i}\geq 1$ versus $H_{i1}:k_{i}=0$, thus
avoiding the arbitrariness of having to create an \textquotedblleft
artificial\textquotedblright\ eigenvalue, which is typically used to
initialise procedures based on eigenvalue ratios. Third, our tests - and
therefore our decision rules - do not require any restrictions on the
relative rates of divergence of $p_{1},p_{2}\ $and$\ T$ as they pass to
infinity, nor do they require the white noise assumption on the
idiosyncratic error matrix as in \citet{wang2019factor}. As far as the last
point is concerned, we would like to mention that the set-up by %
\citet{wang2019factor} (see also \citealp{LY12}) assumes that $E_{t}$ is
white noise, although, as a trade-off, less restrictive assumptions are
needed on the cross-sectional correlation among the components of $E_{t}$.
In the context of such a set-up, the factor model can be validated by using
existing high-dimensional white noise tests. Conversely, in the context of
an approximate factor model like ours, the issue of model validation has not
been fully investigated, i.e. no test exists to check that there is indeed a
factor structure. Our paper fills the gap in literature, and, in general, is
applicable to a wide variety of datasets.

In addition to diagnosing matrix structures, tests for (\ref{Q1}) can be
cast in a sequential procedure, as e.g. in \citet{onatski09} and %
\citet{trapani2018randomized}, thereby obtaining an estimator for the number
of common row (and/or column) factors. To the best of our knowledge, this is
the first estimator of the numbers of row and/or column factors specifically
designed for large matrix sequence, not based on eigenvalue thresholding.
After determining the common factor dimensions, it is possible to apply the
inferential theory developed e.g. in \citet{fan2021}, or \citet{hkyz2021}.
We propose two methodologies to test for (\ref{Q1}), based on the
eigenvalues of two different sample second moment matrices. Our first
procedure is based on evaluating the $k_{i}^{0}$-th largest eigenvalues of
the row (when $i=1$) and column (when $i=2$) \textquotedblleft
flattened\textquotedblright\ sample covariance matrices, defined as%
\begin{eqnarray*}
M_{c} &:&=\frac{1}{Tp_{2}}\sum_{t=1}^{T}X_{t}X_{t}^{\prime }=\frac{1}{Tp_{2}}%
\sum_{t=1}^{T}\sum_{i=1}^{p_{2}}X_{\cdot i,t}X_{\cdot i,t}^{\prime }, \\
M_{r} &:&=\frac{1}{Tp_{1}}\sum_{t=1}^{T}X_{t}^{\prime }X_{t}=\frac{1}{Tp_{1}}%
\sum_{t=1}^{T}\sum_{j=1}^{p_{1}}X_{j\cdot ,t}X_{j\cdot ,t}^{\prime },
\end{eqnarray*}%
where $X_{\cdot i,t}$\ denotes the $i$-th column of $X_{t}$, and $X_{j\cdot
,t}$\ its $j$-th row. This testing procedure is computationally
straightforward, and it requires only one step. On the other hand, using $%
M_{c}$ and $M_{r}$ ignores the two-way factor structure in model (\ref{fm}).
Hence, we also propose a second, two-step methodology which makes full use
of the low-rank structure of the common component matrix in (\ref{fm}). In
particular, we test for (\ref{Q1}) based on the column covariance matrix of
a projected matrix time series, inspired by \citet{hkyz2021}.

We would like to point out that our set-up, despite its generality, still
requires some restrictions on the data generating process of $X_{t}$.
Indeed, whilst we allow for weak cross-sectional dependence among the
idiosyncratic components, we would like to point out that the recent
contribution by \citet{lam2021rank} considers a different, stronger form of
dependence in the idiosyncratic{\ errors}, arising from the presence of weak
common factors. Further, in our theory, we do not consider the presence of
weak factors (see, however, the discussion in Section \ref{weak-factors}),
which may be viewed as a shortcoming of our set-up; however, in Section \ref%
{weak-factors}, we briefly discuss this case, indicating that it can also be
studied with our methodology. Also, our estimator of, say, $k_{1}$ based on
the projection estimator of \citet{hkyz2021} requires $k_{2}>0$, which
therefore must be tested beforehand (see the discussion in Section \ref%
{k2wrong}). Finally, a key requirement for our approach is that the
specification in (\ref{fm}){\ and (\ref{fm1})} is correct, i.e. that there
is a Kronecker product structure in the loadings as indicated in (\ref{kron}%
); in the concluding section, we further discuss the implications of this
assumption.

\bigskip

The rest of the paper is organized as follows. Section \ref{spectra}
presents the main assumptions and results on the spectra of $M_{c}$ and $%
M_{r}$, as well as the projection-based second moment matrices. Section \ref%
{inference} gives two hypotheses testing procedures for (\ref{Q1}), and the
sequential testing methodology to determine $k_{i}$ for $i=1$ and $2$; in
particular, our \textquotedblleft strong\textquotedblright\ rule to decide
between $H_{i0}$ and $H_{i1}$ is given in Section \ref{strong}. We evaluate
our theory through an extensive simulation exercise in Section \ref%
{simulation}, and we further illustrate our findings through two empirical
applications in Section \ref{empirical}. Section \ref{conclusion} concludes
the paper and discusses some avenues for future research.

To end this section, we introduce some further notation in addition to the
one already defined above. Positive finite constants are denoted as $c_{0}$,
$c_{1}$, ..., and their values may change from line to line. Throughout the
paper, we use the short-hand notation \textquotedblleft
a.s.\textquotedblright\ for \textquotedblleft almost
sure(ly)\textquotedblright . Given two sequences $a_{p_{1},p_{2},T}$ and $%
b_{p_{1},p_{2},T}$, we say that $a_{p_{1},p_{2},T}=o_{a.s.}\left(
b_{p_{1},p_{2},T}\right) $ if, as $\min \left\{ p_{1},p_{2},T\right\}
\rightarrow \infty $, it holds that $a_{p_{1},p_{2},T}b_{p_{1},p_{2},T}^{-1}%
\rightarrow 0$ a.s.; we say that $a_{p_{1},p_{2},T}=O_{a.s.}\left(
b_{p_{1},p_{2},T}\right) $ to denote that as $\min \left\{
p_{1},p_{2},T\right\} \rightarrow \infty $, it holds that $%
a_{p_{1},p_{2},T}b_{p_{1},p_{2},T}^{-1}\rightarrow c_{0}<\infty $ a.s.; and
we use the notation $a_{p_{1},p_{2},T}=\Omega _{a.s.}\left(
b_{p_{1},p_{2},T}\right) $ to indicate that as $\min \left\{
p_{1},p_{2},T\right\} \rightarrow \infty $, it holds that $%
a_{p_{1},p_{2},T}b_{p_{1},p_{2},T}^{-1}\rightarrow c_{0}>0$ a.s. Given an $%
m\times n$ matrix $A$, we denote its transpose as $A^{\prime }$ and its
element in position $\left\{ i,j\right\} $ as $A_{ij}$ or $a_{ij}$, i.e.
using either upper or lower case letters. Further, we denote the spectral
norm as $\left\Vert A\right\Vert $; we use $\left\Vert A\right\Vert _{\max }$
to denote the maximum of the absolute values of $A$'s elements; finally, we
let $\lambda _{i}\left( A\right) $ be the $i$-th largest eigenvalue of $A$.
Other, relevant notation is introduced later on in the paper.

\section{Spectra\label{spectra}}

We study the eigenvalues of the covariance matrices $M_{c}$ and $M_{r}$, and
of the projected versions (denoted as $\widetilde{M}_{c}$ and $\widetilde{M}%
_{r}$). In both cases, we find that the matrices have an eigen-gap between
the first $k_{1}$ (resp. $k_{2}$) eigenvalues and the remaining ones. As the
cross-sectional sample size $p_{1}$ (resp. $p_{2}$), increases, the first $%
k_{1}$ (resp. $k_{2}$) eigenvalues diverge at a faster rate than the
remaining ones.

\subsection{Assumptions\label{assumptions}}

The following assumptions are borrowed from the paper by \cite{hkyz2021}, to
which we refer for detailed explanations. In Section \ref{discu-ass} of the
Supplementary Material, we discuss some of our assumptions in greater detail.

\begin{assumpB}
\label{factors} (i) (a) $E(F_{t})=0$, and (b) $E\Vert F_{t}\Vert
^{4+\epsilon }\leq c_{0}$, for some $\epsilon >0$; (ii) when $k_{i}>0$ for $%
i=1,2$, it holds that
\begin{equation}
\frac{1}{T}\sum_{t=1}^{T}F_{t}F_{t}^{\prime }\overset{a.s.}{\rightarrow }{%
\Sigma }_{1}\text{ and }\frac{1}{T}\sum_{t=1}^{T}F_{t}^{\prime }F_{t}\overset%
{a.s.}{\rightarrow }\Sigma _{2},  \label{equ:covariance}
\end{equation}%
where $\Sigma _{i}$ is a $k_{i}\times k_{i}$ positive definite matrix with
distinct eigenvalues, $\lambda _{\max }\left( \Sigma _{i}\right) <\infty $,
and spectral decomposition $\Sigma _{i}=\Gamma _{i}\Lambda _{i}\Gamma
_{i}^{\prime }$. The factor numbers $k_{1}$ and $k_{2}$ are fixed as $\min
\{T,p_{1},p_{2}\}\rightarrow \infty $; (iii) it holds that, for all $%
h_{1},l_{1}$ and $h_{2},l_{2}$%
\begin{equation*}
E\max_{1\leq \widetilde{t}\leq T}\left( \sum_{t=1}^{\widetilde{t}}\left(
F_{h_{1}h_{2},t}F_{l_{1}l_{2},t}-E\left(
F_{h_{1}h_{2},t}F_{l_{1}l_{2},t}\right) \right) \right) ^{2}\leq c_{0}T;
\end{equation*}%
(iv) (a) when $k_{2}=0$ and $k_{1}>0$, it holds that
\begin{equation*}
\lambda _{\max }\left( \frac{1}{T}\sum_{t=1}^{T}F_{t}^{\prime }F_{t}\right)
=O_{a.s.}\left( \left( 1+\sqrt{\frac{p_{2}}{T}}\right) ^{2}\right) \text{ \
and \ }\frac{1}{Tp_{2}}\sum_{t=1}^{T}F_{t}F_{t}^{\prime }\overset{a.s.}{%
\rightarrow }{\Sigma }_{1}^{\ast }\text{,}
\end{equation*}%
with ${\Sigma }_{1}^{\ast }$ a $k_{1}\times k_{1}$ positive definite matrix
with distinct eigenvalues and $\lambda _{\max }\left( \Sigma _{1}^{\ast
}\right) <\infty $; (b) when $k_{1}=0$ and $k_{2}>0$, it holds that
\begin{equation*}
\lambda _{\max }\left( \frac{1}{T}\sum_{t=1}^{T}F_{t}F_{t}^{\prime }\right)
=O_{a.s.}\left( \left( 1+\sqrt{\frac{p_{1}}{T}}\right) ^{2}\right) \text{ \
and \ }\frac{1}{Tp_{1}}\sum_{t=1}^{T}F_{t}^{\prime }F_{t}\overset{a.s.}{%
\rightarrow }\Sigma _{2}^{\ast },
\end{equation*}%
with ${\Sigma }_{2}^{\ast }$ a $k_{2}\times k_{2}$ positive definite matrix
with distinct eigenvalues and $\lambda _{\max }\left( \Sigma _{2}^{\ast
}\right) <\infty $.
\end{assumpB}

\begin{assumpB}
\label{loading} (i) $\Vert R\Vert_{\max}\leq c_{0}$, and $\Vert
C\Vert_{\max} \leq c_{1}$; (ii) as $\min \{p_{1},p_{2}\}\rightarrow \infty $%
, $\Vert p_{1}^{-1}R^{\prime }R-I_{k_{1}}\Vert \rightarrow 0$ and $\Vert
p_{2}^{-1}C^{\prime }C-I_{k_{2}}\Vert \rightarrow 0$.
\end{assumpB}

Assumptions \ref{factors} and \ref{loading} are standard in large factor
models, and we refer, for example, to \citet{fan2021}. In Assumption \ref%
{factors}\textit{(i)}(b), note the (mild) strengthening of the customarily
assumed fourth moment existence condition on $F_{t}$ - this is required in
order to prove our results, which rely on almost sure rates. Similarly, the
maximal inequality in part \textit{(iii)} of the assumption is usually not
considered in the literature, and it can be derived from more primitive
dependence assumptions: for example, it can be shown to hold under various
mixing conditions (see e.g. \citealp{rio1995maximal}; and %
\citealp{shao1995maximal}); in Section \ref{bernou} in the Supplementary
Material, we show its validity for the very general class of decomposable
Bernoulli shifts (see e.g. \citealp{wu2005}). Part \textit{(iv)} of the
assumption is needed to study the case where $k_{i}=0$ - in that case,
according to (\ref{fm1}), $F_{t}$ is \textquotedblleft
large\textquotedblright\ along one dimension. The bound on $\lambda _{\max
}\left( T^{-1}\sum_{t=1}^{T}F_{t}^{\prime }F_{t}\right) $ is a high-level
condition, which we borrow from the literature on large Random Matrix Theory
(RMT; see the seminal paper by \citealp{geman}, and the review in %
\citealp{elkaroui}). In Section \ref{alt-asy} in the Supplementary Material,
we also discuss what happens under more primitive assumptions which do not
require the use of RMT.

Finally, we point out that, according to Assumption \ref{loading}, the
common factors are pervasive. Extensions to the case of \textquotedblleft
weak\textquotedblright\ factors - where the norms of $R$ and $C$ diverge at
a slower rate than $p_{1}^{1/2}$ and $p_{2}^{1/2}$ - are briefly discussed
in Section \ref{weak-factors}.

\begin{assumpB}
\label{idiosyncratic} (i) (a) $E(e_{ij,t})=0$, and (b) $E\left\vert
e_{ij,t}\right\vert ^{8}\leq c_{0}$; (ii) for all $1\leq t\leq T$, $1\leq
i\leq p_{1}$ and $1\leq j\leq p_{2}$,
\begin{equation*}
(\text{a}).\sum_{s=1}^{T}\sum_{l=1}^{p_{1}}%
\sum_{h=1}^{p_{2}}|E(e_{ij,t}e_{lh,s})|\leq c_{0},\quad (\text{b}%
).\sum_{l=1}^{p_{1}}\sum_{h=1}^{p_{2}}|{E}(e_{lj,t}e_{ih,t})|\leq c_{0};
\end{equation*}%
(iii) for all $1\leq t\leq T$, $1\leq i,l_{1}\leq p_{1}$ and $1\leq
j,h_{1}\leq p_{2}$,%
\begin{equation*}
\begin{array}{cl}
\left( \text{a}\right) . & \sum_{s=1}^{T}\sum_{l_{2}=1}^{p_{1}}%
\sum_{h=1}^{p_{2}}\left\vert
Cov(e_{ij,t}e_{l_{1}j,t},e_{ih,s}e_{l_{2}h,s})\right\vert \leq c_{0}, \\
& \sum_{s=1}^{T}\sum_{l=1}^{p_{1}}\sum_{h_{2}=1}^{p_{2}}\left\vert
Cov(e_{ij,t}e_{ih_{1},t},e_{lj,s}e_{lh_{2},s})\right\vert \leq c_{0}, \\
& \sum_{s=1}^{T}\sum_{l=1}^{p_{1}}\sum_{h=1}^{p_{2}}\left\vert
Cov(e_{ij,t}^{2},e_{lh,s}^{2})\right\vert \leq c_{0}, \\
\left( \text{b}\right) . & \sum_{s=1}^{T}\sum_{l_{2}=1}^{p_{1}}%
\sum_{h_{2}=1}^{p_{2}}\left\vert
Cov(e_{ij,t}e_{l_{1}h_{1},t},e_{ij,s}e_{l_{2}h_{2},s})+Cov(e_{l_{1}j,t}e_{ih_{1},t},e_{l_{2}j,s}e_{ih_{2},s})\right\vert \leq c_{0},%
\end{array}%
\end{equation*}%
(iv) it holds that $\lambda _{\min }\left[ E\left( \frac{1}{p_{2}T}%
\sum_{t=1}^{T}E_{t}E_{t}^{\prime }\right) \right] >0$ and $\lambda _{\min }%
\left[ E\left( \frac{1}{p_{1}T}\sum_{t=1}^{T}E_{t}^{\prime }E_{t}\right) %
\right] >0$.
\end{assumpB}

Assumption \ref{idiosyncratic} ensures the (cross-sectional and time series)
summability of the idiosyncratic terms $E_{t}$. The assumption allows for
(weak) dependence in both the space and time domains, and - as also
mentioned in the introduction - it can be read in conjunction with the paper
by \citet{wang2019factor}, where $E_{t}$ is assumed to be white noise, but
no structure is assumed on its covariance matrix. In Section \ref{bernou} in
the Supplementary Material, we show that the time-series properties of $%
E_{t} $ (in particular parts \textit{(ii)} and \textit{(iii)}, which are
high-level assumptions) are satisfied, similarly to Assumption \ref{factors}%
, by the wide class of decomposable Bernoulli shifts.

\begin{assumpB}
\label{depFE} (i) For any deterministic vectors ${v}$ and ${w}$ satisfying $%
\Vert {v}\Vert =1$ and $\Vert {w}\Vert =1$ with suitable dimensions,
\begin{equation*}
{E}\bigg\|\frac{1}{\sqrt{T}}\sum_{t=1}^{T}F_{t}({v}^{\prime }E_{t}{w})\bigg\|%
^{2}\leq c_{0};
\end{equation*}%
(ii) for all $1\leq i,l_{1}\leq p_{1}$ and $1\leq j,h_{1}\leq p_{2}$,{\small
\begin{equation*}
\begin{split}
& (\text{a}).\Big\|\sum_{h=1}^{p_{2}}{E}(\bar{\zeta}_{ij}\otimes \bar{\zeta}%
_{ih})\Big\|_{\max }\leq c_{0},\quad \Big\|\sum_{l=1}^{p_{1}}{E}(\bar{\zeta}%
_{ij}\otimes \bar{\zeta}_{lj})\Big\|_{\max }\leq c_{0}, \\
& (\text{b}).\Big\|\sum_{l=1}^{p_{1}}\sum_{h_{2}=1}^{p_{2}}Cov(\bar{\zeta}%
_{ij}\otimes \bar{\zeta}_{ih_{1}},\bar{\zeta}_{lj}\otimes \bar{\zeta}%
_{lh_{2}})\Big\|_{\max }\leq c_{0},\Big\|\sum_{l_{2}=1}^{p_{1}}%
\sum_{h=1}^{p_{2}}Cov(\bar{\zeta}_{ij}\otimes \bar{\zeta}_{l_{1}j},\bar{\zeta%
}_{ih}\otimes \bar{\zeta}_{l_{2}h})\Big\|_{\max }\leq c_{0},
\end{split}%
\end{equation*}%
}where $\bar{\zeta}_{ij}=\text{Vec}(\sum_{t=1}^{T}F_{t}e_{ij,t}/\sqrt{T})$;
(iii) (a) when $k_{2}=0$, it holds that
\begin{equation*}
\max_{1\leq h,h^{\prime }\leq
k_{2}}\sum_{j=1}^{p_{2}}\sum_{t=1}^{T}\left\vert E\left(
F_{ih,t}F_{i^{\prime }h^{\prime },s}e_{ij,t}e_{i^{\prime }j^{\prime
},s}\right) \right\vert \leq c_{0},
\end{equation*}%
for all $j^{\prime }\neq j$ and $s\neq t$; (b) when $k_{1}=0$, it holds that
\begin{equation*}
\max_{1\leq h,h^{\prime }\leq
k_{1}}\sum_{j=1}^{p_{1}}\sum_{t=1}^{T}\left\vert E\left(
F_{hi,t}F_{h^{\prime }i^{\prime },s}e_{ij,t}e_{i^{\prime }j^{\prime
},s}\right) \right\vert \leq c_{0},
\end{equation*}%
for all $j^{\prime }\neq j$ and $s\neq t$.
\end{assumpB}

According to Assumption \ref{depFE}, the common factors $F_{t}$ and the
errors $E_{t}$ can be weakly correlated. Part \textit{(i)} of the assumption
is similar to e.g. Assumption D in \citet{bai03}, in the context of vector
factor models, and it is easy to see that it is satisfied e.g. when $%
\{F_{t}\}$ and $\{E_{t}\}$ are two mutually independent groups. As far as
part \textit{(ii)} is concerned, this is a more high-level assumption which
is required in order for Lemma B.3 in \citet{hkyz2021} to hold; in turn,
this ensures that the \textquotedblleft initial\textquotedblright\
estimators of $R$ and $C$ required in Section \ref{HeKongLu}\ are
consistent, also providing a rate for them. The assumption is similar, in
spirit, to Assumption 4\textit{(3)} in \citet{fan2021}, and to Assumption D
in \citet{bai03}, in the case of vector valued series. In Section \ref%
{measurable} in the Supplementary Material, we discuss some cases in which
this part of Assumption \ref{depFE} is satisfied, including the case where $%
e_{ij,t}=g\left( F_{t}\right) u_{ij,t}$, with $u_{ij,t}$ independent across $%
i$, $j$, and $t$, and $g\left( \cdot \right) $ a measurable function; again,
a similar case is also mentioned in the discussion of Assumption D in %
\citet{bai03}.
\begin{comment}
Recall that a vector process $\{z_{t},-\infty <t<\infty \}$ is $\alpha $%
-mixing with mixing numbers $\alpha^z(h)$ $=$ $\sup_{t}\sup_{A\in \mathcal{F}%
_{-\infty }^{t},B\in \mathcal{F}_{t+h}^{\infty }}|P(A\cap B)-P(A)P(B)|$,
where $\mathcal{F}_{\tau }^{s}$ is the $\sigma $-field generated by $%
\{z_{t}:\tau \leq t\leq s\}$ if $\alpha (h)\rightarrow 0$ as $h\rightarrow
\infty $.

\begin{assumpB}
\label{mixing} (i) $\{vec(F_{t}),1\leq t\leq T\}$ is an $\alpha $-mixing
process with mixing numbers satisfying%
\begin{equation*}
\sum_{h=1}^{\infty }\left( \alpha^{F}(h)\right) ^{1-2/r}<\infty ,
\end{equation*}%
where $r$ is defined in Assumption \ref{factors}(ii); (ii) $%
\{vec(E_{t}),1\leq t\leq T\}$ is an $\alpha $-mixing process with mixing
numbers satisfying%
\begin{equation*}
\sum_{h=1}^{\infty }\left( \alpha^{E}(h)\right) ^{1-2/r^{\prime }}<\infty ,
\end{equation*}%
for some $r^{\prime }>2$.
\end{assumpB}

Assumption \ref{mixing} allows for serial dependence in the common factors
and in the error terms, and, consequently, in the whole process $X_{t}$. We
note that existing methodologies to determine the number of factors in
matrix factor models usually require the assumption that $E_{t}$ is white
noise, which may not be always satisfied.
\end{comment}

\subsection{The spectra of $M_{c}$ and $M_{r}$\label{trapani}}

To avoid repetitions, we only present results for $M_{c}$; the spectrum of $%
M_{r}$ can be studied exactly in the same way. We use the short-hand
notation $\lambda _{j}$ to indicate the $j$-th largest eigenvalue of the
expectation of $M_{c}$, and use $\widehat{\lambda }_{j}$ denote the $j$-th
largest eigenvalue of $M_{c}$.

Our first theorem provides an a.s. eigen-gap for $M_{c}$.

\begin{theorem}
\label{theorem:tildeMc} Suppose that Assumptions \ref{factors}-\ref{depFE}
are satisfied. When $k_{1}>0$, it holds that
\begin{equation}
\widehat{\lambda }_{j}=\Omega _{a.s.}\left( p_{1}\right) ,  \label{trap1}
\end{equation}%
for all $j\leq k_{1}$; further, there exist a constant $c_{0}<\infty $ such
that
\begin{equation}
\widehat{\lambda }_{j}=c_{0}+o_{a.s.}\left( \frac{p_{1}}{\sqrt{Tp_{2}}}%
\left( \ln ^{2}p_{1}\ln p_{2}\ln T\right) ^{1/2+\epsilon }\right) ,
\label{trap2}
\end{equation}%
for all $j>k_{1}$, and all $\epsilon >0$. When $k_{1}=0$, it holds that%
\begin{equation}
\widehat{\lambda }_{j}=c_{0}+O_{a.s.}\left( \frac{p_{1}}{T}\right)
+o_{a.s.}\left( \frac{p_{1}}{\sqrt{Tp_{2}}}\left( \ln ^{2}p_{1}\ln p_{2}\ln
T\right) ^{1/2+\epsilon }\right) ,  \label{trap3}
\end{equation}%
for all $j\geq 1$, and all $\epsilon >0$.
\end{theorem}

The eigen-gap in the spectrum of $M_{c}$ is the building block to construct
a procedure to decide between $H_{i0}$ and $H_{i1}$ in (\ref{Q1}). We point
out that, although the results in (\ref{trap1}) and (\ref{trap2}) are
similar, in spirit, to the ones derived by \citet{trapani2018randomized},
here we follow a quite different method of proof. Using the approach in %
\citet{trapani2018randomized}, we would be able to show only the rate $%
o_{a.s.}\left( T^{-1/2}p_{1}\left( \ln ^{2}p_{1}\ln p_{2}\ln T\right)
^{1/2+\epsilon }\right) $ in (\ref{trap2}), thus having a (much) worse rate;
moreover, the case $k_{i}=0$, where $F_{t}$ has growing dimension, is not
covered by \citet{trapani2018randomized}. As far as (\ref{trap3}) is
concerned, we note that this is a consequence of Assumption \ref{factors}%
\textit{(iv)}, and in particular of the bound $\lambda _{\max }\left(
T^{-1}\sum_{t=1}^{T}F_{t}F_{t}^{\prime }\right) =O_{a.s.}\left( \left( 1+%
\sqrt{p_{1}/T}\right) ^{2}\right) $ - see also Section \ref{alt-asy} in the
Supplementary Material for a discussion.

\subsection{The spectra of projected covariance matrices\label{HeKongLu}}

The matrices $M_{c}$ and $M_{r}$ are straightforward to compute and use, but
they are based on the implicit assumption that a \textquotedblleft
one-way\textquotedblright\ factor structure is present only in the columns
(or rows) of the observations.

When $k_{2}>0$, we propose to fully make use of the two-way interactive
factor structure in (\ref{fm}), by studying the spectrum of a projected
column (row) covariance matrix, as suggested by \citet{hkyz2021}.
Heuristically, if $k_{2}>0$ and if $C$ is known and satisfies the
orthogonality condition $C^{\prime }C/p_{2}=I_{k_{2}}$, the data matrix can
be projected into a lower dimensional space by setting $Y_{t}=X_{t}C/p_{2}$.
In view of this, we define
\begin{equation*}
\widetilde{M}_{c}=\frac{1}{T}\sum_{t=1}^{T}\widetilde{Y}_{t}\widetilde{Y}%
_{t}^{\prime },
\end{equation*}%
where $\widetilde{Y}_{t}=X_{t}\widehat{C}{/{p_{2}}}$ and $\widehat{C}$ is an
initial estimator of $C$ ($\widetilde{M}_{r}$ can be defined similarly). As
suggested by \citet{hkyz2021}, the initial estimator can be set as $\widehat{%
C}=\sqrt{p_{2}}Q$, where the columns of $Q$ are the leading $k_{2}$
eigenvectors of $M_{r}$.

\bigskip

Let $\widetilde{\lambda }_{j}$ denote the $j$-th largest eigenvalue of $%
\widetilde{M}_{c}$. The following result measures the eigen-gap of $%
\widetilde{M}_{c}$.

\begin{theorem}
\label{theorem:tildeM1} We assume that Assumptions \ref{factors}-\ref{depFE}
are satisfied and that $k_{2}>0$. When $k_{1}>0$, it holds that
\begin{equation}
\widetilde{\lambda }_{j}=\Omega _{a.s.}\left( p_{1}\right) ,
\label{tildelarge}
\end{equation}%
for all $j\leq k_{1}$; further, it holds that
\begin{equation}
\widetilde{\lambda }_{j}=o_{a.s.}\left( \left( \frac{1}{p_{2}}+\frac{1}{T}+%
\frac{p_{1}}{\sqrt{Tp_{2}}}\right) \left( \ln ^{2}p_{1}\ln p_{2}\ln T\right)
^{1+\epsilon }\right) ,  \label{tildesmall}
\end{equation}%
for all $j>k_{1}$ and all $\epsilon >0$. When $k_{1}=0$, it holds that%
\begin{equation}
\widetilde{\lambda }_{j}=O_{a.s.}\left( \frac{p_{1}}{T}\right)
+o_{a.s.}\left( \left( \frac{1}{p_{2}}+\frac{1}{T}+\frac{p_{1}}{\sqrt{Tp_{2}}%
}\right) \left( \ln ^{2}p_{1}\ln p_{2}\ln T\right) ^{1+\epsilon }\right) ,
\label{trap4}
\end{equation}%
for all $j\geq 1$, and all $\epsilon >0$.
\end{theorem}

Comparing (\ref{tildesmall}) with (\ref{trap2}) in Theorem \ref%
{theorem:tildeMc}, the eigen-gap of $\widetilde{M}_{c}$ is wider than that
of $M_{c}$. Thus, using $\widetilde{M}_{c}$ should yield a higher testing
power and a better estimate of $k_{1}$ (and/or $k_{2}$) if the two-way
interactive factor structure is really true in practice. Of course, this is
predicated upon having $k_{2}>0$. As also mentioned after Theorem \ref%
{theorem:tildeMc}, the rate in (\ref{tildesmall}) is sharper than one would
find following method of proof in \citet{trapani2018randomized}; even in
this case, we would only obtain the rate $o_{a.s.}\left( T^{-1/2}p_{1}\left(
\ln ^{2}p_{1}\ln p_{2}\ln T\right) ^{1+\epsilon }\right) $, which again
would be sub-optimal. Finally, the case where $k_{1}=0$ and $k_{2}>0$ is
covered by equation (\ref{trap4}): the same comments as for (\ref{trap3})
apply in this case (see also Section \ref{alt-asy}).

\section{Inference on the number of factors\label{inference}}

In this section, we investigate two related problems, based on determining
the dimension of the (row or column) factor structures. For brevity, we only
report results concerning $k_{1}$, but all our procedures can be readily
extended to analyse $k_{2}$.

\bigskip

We begin by presenting the tests for the null that $H_{0}:k_{1}\geq
k_{1}^{0} $ for a given $k_{1}^{0}$ (we omit the subscript $i$ in $H_{i0}$
for simplicity). We then apply these to determining whether there is a
factor structure; if this is the case, we develop a sequential procedure to
determine the dimension of each factor space. Both procedures are based on
constructing, as a first step, a test based on the rates of divergence of
the eigenvalues of either $M_{c}$ or $\widetilde{M}_{c}$ (see Section \ref%
{sec:Hypothesis of interest}); and, as a second step, a decision rule to
choose between $H_{0}$ and $H_{1}$ which is not affected by the randomness
added by the researcher (Section \ref{strong}).

\subsection{Hypothesis testing and the randomised tests\label{sec:Hypothesis
of interest}}

We consider tests for
\begin{equation}
H_{0}:k_{1}\geq k_{1}^{0}\ \ \text{vs.}\ \ H_{1}:k_{1}<k_{1}^{0}\ \ \text{%
for some}\ \ k_{1}^{0}\in \lbrack 1,\ldots ,k_{\max }],
\label{equ:hypothesisnum}
\end{equation}%
where $k_{\max }$ is a pre-specified upper bound. The hypothesis in (\ref%
{equ:hypothesisnum}) is equivalent to the following hypothesis on the
eigenvalue ${\lambda }_{k_{1}^{0}}$, that's %(the same holds when using $%
%\widetilde{\lambda }_{k_{1}^{0}}$)
\begin{equation}
H_{0}:{\lambda }_{k_{1}^{0}}\geq c_{0}p_{1}\ \ \text{vs.}\ \ H_{1}:{\lambda }%
_{k_{1}^{0}}\leq c_{0}.  \label{equ:hypothesiseigen}
\end{equation}

We propose two types of test statistics for the hypothesis testing problem
in (\ref{equ:hypothesiseigen}). Let $\beta ={\ln p_{1}}/{\ln }\left( {p_{2}T}%
\right) $, and let $\delta =\delta \left( \beta \right) \in \left(
0,1\right) $, such that%
\begin{equation}
\left\{
\begin{array}{lll}
\delta =\varepsilon & \ \text{if}\ \beta \leq 1/2 &  \\
\delta =1-1/(2\beta )+\varepsilon & \ \text{if}\ \beta >1/2 &
\end{array}%
\right. ,  \label{equ:deltabeta}
\end{equation}%
where $\varepsilon >0$ is an arbitrarily small, user-defined number. Given $%
\delta $, we define
\begin{equation}
\widehat{\phi }_{k_{1}^{0}}=\exp \left\{ \frac{p_{1}^{-\delta }\widehat{%
\lambda }_{k_{1}^{0}}}{p_{1}^{-1}\sum_{j=1}^{p_{1}}\widehat{\lambda }_{j}}%
\right\} -1\ \text{and}\ \widetilde{\phi }_{k_{1}^{0}}=\exp \left\{ \frac{%
p_{1}^{-\delta }\widetilde{\lambda }_{k_{1}^{0}}}{p_{1}^{-1}%
\sum_{j=1}^{p_{1}}\widetilde{\lambda }_{j}}\right\} -1;  \label{stats}
\end{equation}%
these are transformations of $\widehat{\lambda }_{k_{1}^{0}}$ and $%
\widetilde{\lambda }_{k_{1}^{0}}$, rescaled by the trace of $M_{c}$ and $%
\widetilde{M}_{c}$ respectively, to make them scale-invariant.

The choice of $\delta $\ in (\ref{equ:deltabeta}) is an important
specification. Its purpose is the same as in \citet{trapani2018randomized},
i.e. to make $p_{1}^{-\delta }\widehat{\lambda }_{k_{1}^{0}}$\ (and $%
p_{1}^{-\delta }\widetilde{\lambda }_{k_{1}^{0}}$) drift to zero when $%
\lambda _{k_{1}^{0}}\leq c_{0}$. In this case, it is easy to see that
rescaling by $p_{1}^{-\delta }$\ gets rid of the estimation error, while
still allowing $p_{1}^{-\delta }\widehat{\lambda }_{k_{1}^{0}}$\ (and $%
p_{1}^{-\delta }\widetilde{\lambda }_{k_{1}^{0}}$) to pass to infinity if $%
\lambda _{k_{1}^{0}}$\ does diverge. It can be verified that the value of $%
\delta $\ in (\ref{equ:deltabeta}) suffices to make the estimation error
drift to zero in all cases covered by Theorems \ref{theorem:tildeMc} and \ref%
{theorem:tildeM1}, with the exception of the cases covered by (\ref{trap3})
and (\ref{trap4}), i.e. when testing for $H_{0}:k_{1}\geq 1$, versus$\
H_{1}:k_{1}=0$. In that case, based on the rates in (\ref{trap3}) and (\ref%
{trap4}), it can be shown that (\ref{equ:deltabeta}) can be employed after
replacing $\beta $ with $\beta ^{\prime }=\ln p_{1}/\min \left\{ {\ln }%
\left( {p_{2}T}\right) ,2\ln \left( T\right) \right\} $.

\bigskip

We now turn to discussing how to use $\widehat{\phi }_{k_{1}^{0}}$\ and $%
\widetilde{\phi }_{k_{1}^{0}}$\ to test for $H_{0}:k_{1}\geq k_{1}^{0}$\ in (%
\ref{equ:hypothesiseigen}). Theorems \ref{theorem:tildeMc} and \ref%
{theorem:tildeM1} provide rates for $\widehat{\lambda }_{k_{1}^{0}}$\ and $%
\widetilde{\lambda }_{k_{1}^{0}}$\ (and, consequently, also for $\widehat{%
\phi }_{k_{1}^{0}}$\ and $\widetilde{\phi }_{k_{1}^{0}}$) under both the
null and the alternative in the hypothesis testing framework in (\ref%
{equ:hypothesiseigen}), but no limiting distribution is available. Hence, we
propose to randomise $\widehat{\phi }_{k_{1}^{0}}$\ and $\widetilde{\phi }%
_{k_{1}^{0}}$, in a similar way to \citet{trapani2018randomized}:

\begin{description}
\item[\textit{Step 1}] Generate \textit{i.i.d.} samples $\{\eta
^{(m)}\}_{m=1}^{M}$ with common distribution $N(0,1)$.

\item[\textit{Step 2}] Given $\left\{ \eta ^{(m)}\right\} _{m=1}^{M}$,
construct sample sets $\{\widehat{\psi }_{k_{1}^{0}}^{(m)}(u)\}_{m=1}^{M}$
and $\{\widetilde{\psi }_{k_{1}^{0}}^{(m)}(u)\}_{m=1}^{M}$ as
\begin{equation*}
\widehat{\psi }_{k_{1}^{0}}^{(m)}(u)=I\left[ \sqrt{\widehat{\phi }%
_{k_{1}^{0}}}\times \eta ^{(m)}\leq u\right] ,\ \ \widetilde{\psi }%
_{k_{1}^{0}}^{(m)}(u)=I\left[ \sqrt{\widetilde{\phi }_{k_{1}^{0}}}\times
\eta ^{(m)}\leq u\right] .
\end{equation*}

\item[\textit{Step 3}] Define%
\begin{equation}
\widehat{\nu }_{k_{1}^{0}}(u)=\frac{2}{\sqrt{M}}\sum_{m=1}^{M}\left[
\widehat{\psi }_{k_{1}^{0}}^{(m)}(u)-\frac{1}{2}\right] ,\ \ \widetilde{\nu }%
_{k_{1}^{0}}(u)=\frac{2}{\sqrt{M}}\sum_{m=1}^{M}\left[ \widetilde{\psi }%
_{k_{1}^{0}}^{(m)}(u)-\frac{1}{2}\right] .  \label{equ:clt}
\end{equation}

\item[\textit{Step 4}] The test statistics are finally defined as
\begin{equation*}
\widehat{\Psi }_{k_{1}^{0}}=\int_{U}\left[ \widehat{\nu }_{k_{1}^{0}}(u)%
\right] ^{2}dF(u),\ \ \widetilde{\Psi }_{k_{1}^{0}}=\int_{U}\left[
\widetilde{\nu }_{k_{1}^{0}}(u)\right] ^{2}dF(u),
\end{equation*}%
where $F(u)$ is a weight function.
\end{description}

The test described above is similar to the one proposed in %
\citet{trapani2018randomized}; however, in the construction of $\widehat{%
\Psi }_{k_{1}^{0}}$ and $\widetilde{\Psi }_{k_{1}^{0}}$, we propose a
weighted average across different values of $u$ through the weight function $%
F\left( u\right) $. As a consequence, it can be expected that the test will
not be affected by an individual value of $u$, a form of scale invariance
which is not considered in \citet{trapani2018randomized}.

\begin{assumpC}
\label{weight} $F\left( u\right) $ is a differentiable function for all $%
u\in U$ such that: (i) $\int_{U}dF\left( u\right) =1$; (ii) $%
\int_{U}u^{2}dF\left( u\right) <\infty $.
\end{assumpC}

Assumption \ref{weight} is satisfied by several functions, the most
\textquotedblleft natural\textquotedblright\ candidates being distribution
functions; we discuss in detail $F\left( u\right) $, its possible
specifications, and how to compute integrals involving it in Section \ref%
{wfunction} of the Supplementary Material.

\bigskip

Let $P^{\ast }$ denote the probability law of $\{\widehat{\psi }%
_{k_{1}^{0}}^{(m)}(u)\}_{m=1}^{M}$ and $\{\widetilde{\psi }%
_{k_{1}^{0}}^{(m)}(u)\}_{m=1}^{M}$ conditional on the sample $\{X_{t},1\leq
t\leq T\}$, and \textquotedblleft $\overset{D^{\ast }}{\rightarrow }$%
\textquotedblright\ and \textquotedblleft $\overset{P^{\ast }}{\rightarrow }$%
\textquotedblright\ as convergence in distribution and in probability,
respectively, according to $P^{\ast }$.

\begin{proposition}
\label{theorem:nulldis} We assume that Assumptions {\ref{factors}-\ref{depFE}
}and \ref{weight} are satisfied. Then, under $H_{0}:k_{1}\geq k_{1}^{0}$, as
$\min \{p_{1},p_{2},T,M\}\rightarrow \infty $ with
\begin{equation}
M\exp \left\{ -\epsilon p_{1}^{1-\delta }\right\} \rightarrow 0,
\label{restrictM}
\end{equation}%
for some $0<\epsilon <c_{0}/\bar{\lambda}$ and $\bar{\lambda}%
=p_{1}^{-1}\sum_{j=1}^{p_{1}}\lambda _{j}$, it holds that
\begin{equation}
\widehat{\Psi }_{k_{1}^{0}}\overset{D^{\ast }}{\rightarrow }\chi _{1}^{2},
\label{asy-null}
\end{equation}%
for almost all realisations of $\{X_{t},1\leq t\leq T\}$. Under the same
assumptions, if $k_{2}>0$ it also holds that $\widetilde{\Psi }_{k_{1}^{0}}%
\overset{D^{\ast }}{\rightarrow }\chi _{1}^{2}$ for almost all realisations
of $\{X_{t},1\leq t\leq T\}$.

Under $H_{1}:\lambda _{k_{1}^{0}}\leq c_{0}<\infty $, as $\min
\{p_{1},p_{2},T,M\}\rightarrow \infty $ it holds that%
\begin{equation}
M^{-1}\widehat{\Psi }_{k_{1}^{0}}\overset{P^{\ast }}{\rightarrow }c_{1},
\label{asy-alt}
\end{equation}%
for some $0<c_{1}<\infty $ and almost all realisations of $\{X_{t},1\leq
t\leq T\}$. Under the same assumptions, if $k_{2}>0$ it also holds that $%
M^{-1}\widetilde{\Psi }_{k_{1}^{0}}\overset{P^{\ast }}{\rightarrow }c_{1}$,
for almost all realisations of $\{X_{t},1\leq t\leq T\}$.
\end{proposition}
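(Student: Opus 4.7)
The plan is to exploit the sharp dichotomy in the behaviour of $\widehat{\phi}_{k_1^0}$ (and similarly $\widetilde{\phi}_{k_1^0}$) across the two hypotheses, and then carry out a conditional CLT / law of large numbers argument given the data. Under $H_0$, Theorems \ref{theorem:tildeMc} and \ref{theorem:tildeM1} yield $\widehat{\lambda}_{k_1^0}=\Omega_{a.s.}(p_1)$ and $\widetilde{\lambda}_{k_1^0}=\Omega_{a.s.}(p_1)$, so on a probability-one event the exponent $p_1^{-\delta}\widehat{\lambda}_{k_1^0}/\bar{\lambda}$ grows like $p_1^{1-\delta}$ and hence $\widehat{\phi}_{k_1^0}\geq c\exp(c_0 p_1^{1-\delta}/\bar{\lambda})$ once $p_1$ is large enough. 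Under $H_1$, the second halves of the same theorems give $\widehat{\lambda}_{k_1^0}=O_{a.s.}(1)$, and hence $\widehat{\phi}_{k_1^0}=o_{a.s.}(p_1^{-\delta})\to 0$. I would fix an element of the probability-one set of data paths on which these a.s.\ conclusions hold and work conditionally on the data throughout.

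For the null limit, on the set above and for every fixed $u$, the conditional probability that $\widehat{\psi}_{k_1^0}^{(m)}(u)\neq I[\eta^{(m)}<0]$ equals $P^\ast(|\eta^{(m)}|\leq |u|/\sqrt{\widehat{\phi}_{k_1^0}})\leq (2\pi)^{-1/2}|u|/\sqrt{\widehat{\phi}_{k_1^0}}$. Setting $Z_M=(2/\sqrt{M})\sum_{m=1}^{M}(I[\eta^{(m)}<0]-\tfrac{1}{2})$, the centred i.i.d.\ sum $\widehat{\nu}_{k_1^0}(u)-Z_M$ has conditional $L^2$ norm bounded by $C|u|\sqrt{M}/\widehat{\phi}_{k_1^0}^{1/2}$, which is $o(1)$ uniformly in $u$ on compact sets thanks to (\ref{restrictM}) with $\epsilon<c_0/\bar\lambda$. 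The classical CLT gives $Z_M\overset{D^\ast}{\to}Z\sim N(0,1)$, so $\widehat{\nu}_{k_1^0}(\cdot)$ converges in $L^2(dF)$ to the constant-in-$u$ random limit $Z$. Continuous mapping applied to $f\mapsto\int f^2\,dF$, together with Assumption \ref{weight}(i), then yields $\widehat{\Psi}_{k_1^0}\overset{D^\ast}{\to}Z^2\sim\chi_1^2$. The argument for $\widetilde{\Psi}_{k_1^0}$ is identical, with Theorem \ref{theorem:tildeM1} taking the place of Theorem \ref{theorem:tildeMc}; the wider eigen-gap in (\ref{tildesmall}) only makes the approximation error smaller.

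Under $H_1$, $\sqrt{\widehat{\phi}_{k_1^0}}\eta^{(m)}\to 0$ $P^\ast$-a.s., so for every $u\neq 0$ the indicator $\widehat{\psi}_{k_1^0}^{(m)}(u)$ equals $I[u\geq 0]$ eventually in $m$; a simple law of large numbers then gives $M^{-1/2}\widehat{\nu}_{k_1^0}(u)\overset{P^\ast}{\to}\mathrm{sign}(u)$ and hence $M^{-1}[\widehat{\nu}_{k_1^0}(u)]^2\overset{P^\ast}{\to}1$. Dominated convergence (the integrand being bounded above by $1$, and $F$ having no atom at zero without loss of generality) combined with Assumption \ref{weight}(ii) then gives $M^{-1}\widehat{\Psi}_{k_1^0}\overset{P^\ast}{\to}\int_U dF(u)=1=:c_1$. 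The main obstacle will be the null part: making the approximation $\widehat{\psi}_{k_1^0}^{(m)}(u)\approx I[\eta^{(m)}<0]$ work uniformly in $u$ in the $L^2(dF)$ sense while respecting the calibration $M\ll\exp(\epsilon p_1^{1-\delta})$. If $M$ grew faster than this, the $\sqrt{M}/\widehat{\phi}_{k_1^0}^{1/2}$ error would swamp the normal limit; once the uniform approximation is in hand, the key conceptual point, namely that the limiting $\widehat{\nu}_{k_1^0}(\cdot)$ is a single $N(0,1)$ variable shared across all $u$ and not a family of independent Gaussians, follows automatically from the fact that in the limit $\widehat{\psi}_{k_1^0}^{(m)}(u)$ degenerates to a function of $\eta^{(m)}$ alone.
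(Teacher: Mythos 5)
Your proposal is correct and follows essentially the same route as the paper's proof: under $H_{0}$ both arguments reduce $\widehat{\nu}_{k_{1}^{0}}(u)$ to the fair-coin sum at $u=0$ plus a remainder whose $L^{2}(dF)$ norm vanishes (the bias part being exactly where (\ref{restrictM}) is needed), and under $H_{1}$ both let the indicators degenerate to $I[u\geq 0]$ so that $M^{-1}\widehat{\Psi }_{k_{1}^{0}}$ converges to a positive constant. The only slip is that your stated $L^{2}$ bound $C|u|\sqrt{M}\,\widehat{\phi }_{k_{1}^{0}}^{-1/2}$ accounts only for the (nonzero) mean of $\widehat{\psi }_{k_{1}^{0}}^{(m)}(u)-I[\eta ^{(m)}<0]$ and omits its variance contribution, of order $|u|^{1/2}\widehat{\phi }_{k_{1}^{0}}^{-1/4}$, which however also vanishes, so the conclusion is unaffected.
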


Equation (\ref{asy-null}) states that, under the null, both test statistics $%
\widehat{\Psi }_{k_{1}^{0}}$ and $\widetilde{\Psi }_{k_{1}^{0}}$ converge in
distribution to a chi-square distribution with one degree of freedom. This
can be understood heuristically by noting that, under the null, both $%
\widehat{\phi }_{k_{1}^{0}}$ and $\widetilde{\phi }_{k_{1}^{0}}$ go to
infinity, and therefore the variances of $\sqrt{\widehat{\phi }_{k_{1}^{0}}}%
\times \eta ^{(m)}$ and $\sqrt{\widetilde{\phi }_{k_{1}^{0}}}\times \eta
^{(m)}$ also pass to infinity. Thus, heuristically, $\{\widehat{\psi }%
_{k_{1}^{0}}^{(m)}(u)\}_{m=1}^{M}$ and $\{\widetilde{\psi }%
_{k_{1}^{0}}^{(m)}(u)\}_{m=1}^{M}$ follow a Bernoulli distribution with
success probability $1/2$. By the Central Limit Theorem, in (\ref{equ:clt})
as $M$ goes to infinity, both $\widehat{\nu }_{k_{1}^{0}}(u)$ and $%
\widetilde{\nu }_{k_{1}^{0}}(u)$ follow the standard normal distribution $%
N(0,1)$ (conditional on the sample) asymptotically. The results hold for all
samples, save for a zero measure set. By Proposition \ref{theorem:nulldis},
it follows immediately that, for almost all realisations of $\{X_{t},1\leq
t\leq T\}$%
\begin{equation}
\lim_{\min \{p_{1},p_{2},T,M\}\rightarrow \infty }P^{\ast }\left( \widehat{%
\Psi }_{k_{1}^{0}}>c_{\alpha }|H_{0}\right) =\alpha ,  \label{size}
\end{equation}%
where $c_{\alpha }$ is such that $P\left( \chi _{1}^{2}>c_{\alpha }\right)
=\alpha $, and
\begin{equation}
\lim_{\min \{p_{1},p_{2},T,M\}\rightarrow \infty }P^{\ast }\left( \widehat{%
\Psi }_{k_{1}^{0}}>c_{\alpha }|H_{1}\right) =1.  \label{power}
\end{equation}%
The results also hold if one substitutes $\widehat{\Psi }_{k_{1}^{0}}$ with $%
\widetilde{\Psi }_{k_{1}^{0}}$ in (\ref{size}) and (\ref{power}).

\subsection{A \textquotedblleft strong\textquotedblright\ rule to decide
between $H_{0}$ and $H_{1}$\label{strong}}

The tests are constructed by using added randomness, $\{\eta
^{(m)}\}_{m=1}^{M}$, whose effect does not vanish asymptotically as would be
the case e.g. when using the bootstrap. In turn, this entails that the
properties of tests based on $\widehat{\Psi }_{k_{1}^{0}}$ (and $\widetilde{%
\Psi }_{k_{1}^{0}}$) are different from the properties of \textquotedblleft
standard\textquotedblright\ tests. Indeed, equation (\ref{power}) has the
classical interpretation: whenever a researcher uses $\widehat{\Psi }%
_{k_{1}^{0}}$ (and $\widetilde{\Psi }_{k_{1}^{0}}$), (s)he will reject the
null, when false, with probability one. Conversely, the implications of (\ref%
{size}) are subtler. Due to the artificial randomness $\{\eta
^{(m)}\}_{m=1}^{M}$, different researchers using the same data will obtain
different values of $\widehat{\Psi }_{k_{1}^{0}}$ and $\widetilde{\Psi }%
_{k_{1}^{0}}$, and, consequently, different $p$-values; indeed, if an
infinite number of researchers were to carry out the test, the $p$-values
would follow a uniform distribution on $\left[ 0,1\right] $. %
\citet{corradi2006} provide an alternative explanation, writing that
\textquotedblleft \lbrack ...] as the sample size gets larger, all
researchers always reject the null when false, while $\alpha \%$ of the
researchers always reject the null when it is true\textquotedblright.
%\begin{comment}This implies that $\hat{\Psi}_{k_1^0}$ and $\tilde{\Psi}_{k_1^0}$ would falsely reject the null at a prevalent rate even if $\hat{\phi}_{k_1^0}$ and $\tilde{\phi}_{k_1^0}$ are in theory perfectly separated almost surely as the sample size increases to infinity.\end{comment}

In order to address this problem, we propose a further step which, in
essence, \textquotedblleft de-randomizes\textquotedblright\ $\widehat{\Psi }%
_{k_{1}^{0}}$ and $\widetilde{\Psi }_{k_{1}^{0}}$. Each researcher, instead
of computing $\widehat{\Psi }_{k_{1}^{0}}$ or $\widetilde{\Psi }_{k_{1}^{0}}$
just once, will compute the test statistic $S$ times, at each iteration $s$
generating a statistic $\widehat{\Psi }_{k_{1}^{0},s}$ (or $\widetilde{\Psi }%
_{k_{1}^{0},s}$) using a random sequence $\left\{ \eta _{s}^{\left( m\right)
},1\leq m\leq M\right\} $, independent across $1\leq s\leq S$, and thence
defining, for some $\alpha \in \left( 0,1\right) $%
\begin{equation}
\widehat{Q}_{k_{1}^{0}}\left( \alpha \right) =S^{-1}\sum_{s=1}^{S}I\left[
\widehat{\Psi }_{k_{1}^{0},s}\leq c_{\alpha }\right] ,  \label{q}
\end{equation}%
and the same when using $\widetilde{\Psi }_{k_{1}^{0},s}$ - in this case
obtaining $\widetilde{Q}_{k_{1}^{0}}\left( \alpha \right) $. A consequence
of Proposition \ref{theorem:nulldis} is
\begin{equation}
\begin{tabular}{ll}
$\lim_{\min (p_{1},p_{2},T,M,S)\rightarrow \infty }P^{\ast }\{\widehat{Q}%
_{k_{1}^{0}}\left( \alpha \right) =1-\alpha \}=1$ & $\text{for }%
H_{0}:k_{1}\geq k_{1}^{0},$ \\
$\lim_{\min p_{1},p_{2},T,M,S)\rightarrow \infty }P^{\ast }\{\widehat{Q}%
_{k_{1}^{0}}\left( \alpha \right) =0\}=1$ & $\text{for }%
H_{1}:k_{1}<k_{1}^{0}.$%
\end{tabular}
\label{test-function}
\end{equation}%
Equation (\ref{test-function}) stipulates that, as $S\rightarrow \infty $,
averaging across $s$ in (\ref{q}) washes out the added randomness in $%
\widehat{Q}_{k_{1}^{0}}\left( \alpha \right) $: all researchers using this
procedure will obtain the same value of $\widehat{Q}_{k_{1}^{0}}\left(
\alpha \right) $, thereby ensuring reproducibility. The function $\widehat{Q}%
_{k_{1}^{0}}\left( \alpha \right) $ corresponds to (the complement to one
of) the \textquotedblleft fuzzy decision\textquotedblright , or
\textquotedblleft abstract randomised decision rule\textquotedblright\
reported in equation (1.1a) in \citet{geyer}. \citet{geyer} (see also %
\citealp{geyer2005rejoinder}) provide a helpful discussion of the meaning of
$\widehat{Q}_{k_{1}^{0}}\left( \alpha \right) $: the problem of deciding in
favour or against $H_{0}$ may be modelled through a random variable, say $D$%
, which can take two values, namely \textquotedblleft do not reject $H_{0}$%
\textquotedblright\ and \textquotedblleft reject $H_{0}$\textquotedblright .
Such a random variable has probability $\widehat{Q}_{k_{1}^{0}}\left( \alpha
\right) $ to take the value \textquotedblleft do not reject $H_{0}$%
\textquotedblright , and probability $1-\widehat{Q}_{k_{1}^{0}}\left( \alpha
\right) $ to take the value \textquotedblleft reject $H_{0}$%
\textquotedblright . In this context, (\ref{test-function}) states that
(asymptotically), the probability of the event $\left\{ \omega :D=\text{%
\textquotedblleft reject }H_{0}\text{\textquotedblright }\right\} $ is $%
\alpha $ when $H_{0}$ is satisfied, for all researchers - corresponding to
the notion of \textit{size} of a test; see also the quote from %
\citet{corradi2006} reported above. Conversely, under $H_{1}$, the
probability of the event $\left\{ \omega :D=\text{\textquotedblleft reject }%
H_{0}\text{\textquotedblright }\right\} $ is $1$ (asymptotically),
corresponding to the notion of \textit{power}.

\bigskip

Reporting the value of $\widehat{Q}_{k_{1}^{0}}\left( \alpha \right) $ or $%
\widetilde{Q}_{k_{1}^{0}}\left( \alpha \right) $ could be sufficient in some
applications. In our case, the individual tests for $H_{0}:k_{1}\geq
k_{1}^{0}$ will form the basis of a sequential procedure to provide an
estimate of $k_{1}$, and therefore we also need a decision rule to choose,
based on $\widehat{Q}_{k_{1}^{0}}\left( \alpha \right) $ (or $\widetilde{Q}%
_{k_{1}^{0}}\left( \alpha \right) $), between $H_{0}$ and $H_{1}$. We base
such a decision rule on a Law of the Iterated Logarithm for $\widehat{Q}%
_{k_{1}^{0}}\left( \alpha \right) $ and $\widetilde{Q}_{k_{1}^{0}}\left(
\alpha \right) $.

\begin{theorem}
\label{strong-rule} We assume that Assumptions {\ref{factors}-\ref{depFE}}
and \ref{weight} are satisfied, and that $M=O\left( T\right) $ and $S=\Omega
\left( M\right) $. Then it holds that%
\begin{equation}
\frac{\widehat{Q}_{k_{1}^{0}}\left( \alpha \right) -\left( 1-\alpha \right)
}{\sqrt{\alpha \left( 1-\alpha \right) }}=\Omega _{a.s.}\left( \sqrt{\frac{%
2\ln \ln S}{S}}\right) ,\ \frac{\widetilde{Q}_{k_{1}^{0}}\left( \alpha
\right) -\left( 1-\alpha \right) }{\sqrt{\alpha \left( 1-\alpha \right) }}%
=\Omega _{a.s.}\left( \sqrt{\frac{2\ln \ln S}{S}}\right) ,  \label{st1}
\end{equation}%
under $H_{0}:k_{1}\geq k_{1}^{0}$, for almost all realisations of $%
\{X_{t},1\leq t\leq T\}$. Also, it holds that%
\begin{equation}
\widehat{Q}_{k_{1}^{0}}\left( \alpha \right) =o_{a.s.}\left( 1\right) \ \
\text{and}\ \ \widetilde{Q}_{k_{1}^{0}}\left( \alpha \right) =o_{a.s.}\left(
1\right) ,  \label{st2}
\end{equation}%
under $H_{1}:k_{1}<k_{1}^{0}$, for almost all realisations of $\{X_{t},1\leq
t\leq T\}$.
\end{theorem}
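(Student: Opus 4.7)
The plan is to reduce the analysis of $\widehat{Q}_{k_{1}^{0}}(\alpha)$ to a classical Law of the Iterated Logarithm (LIL) for sums of i.i.d.\ Bernoulli variables, after an appropriate conditioning and quantile coupling. Because the randomization sequences $\{\eta_{s}^{(m)}\}_{m=1}^{M}$ are independent across $s$, the statistics $\{\widehat{\Psi}_{k_{1}^{0},s}\}_{s=1}^{S}$ are, conditional on the data $\{X_{t}\}_{t=1}^{T}$, i.i.d., and hence the indicators $I_{s}:=I[\widehat{\Psi}_{k_{1}^{0},s}\leq c_{\alpha}]$ are conditionally i.i.d.\ Bernoulli with data-measurable parameter $p_{M,T}:=P^{\ast}(\widehat{\Psi}_{k_{1}^{0},1}\leq c_{\alpha})$. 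I would then work from the decomposition
$$\widehat{Q}_{k_{1}^{0}}(\alpha)-(1-\alpha)=\bigl[\widehat{Q}_{k_{1}^{0}}(\alpha)-p_{M,T}\bigr]+\bigl[p_{M,T}-(1-\alpha)\bigr],$$
handling the two parts separately.

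For (\ref{st1}) under $H_{0}$, I would first bound the bias $|p_{M,T}-(1-\alpha)|$ by a Berry--Esseen-type argument. Conditional on the data, $\widehat{\nu}_{k_{1}^{0}}(u)$ is a normalized sum of $M$ bounded i.i.d.\ Bernoulli variables $\widehat{\psi}_{k_{1}^{0}}^{(m)}(u)$ whose common mean $\Phi(u/\sqrt{\widehat{\phi}_{k_{1}^{0}}})$ differs from $1/2$ by $O(u/\sqrt{\widehat{\phi}_{k_{1}^{0}}})$; by Theorem \ref{theorem:tildeMc} together with (\ref{equ:deltabeta}), $\widehat{\phi}_{k_{1}^{0}}$ is a.s.\ exponentially large in $p_{1}^{1-\delta}$ under $H_{0}$, so this centering error is negligible. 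Berry--Esseen then gives $\sup_{v}|P^{\ast}(\widehat{\nu}_{k_{1}^{0}}(u)\leq v)-\Phi(v)|=O(M^{-1/2})$ uniformly in $u$, and a continuous-mapping plus integration-against-$F$ argument for the square yields $|p_{M,T}-(1-\alpha)|=O(M^{-1/2})=o(\sqrt{\ln\ln S/S})$ when $M=c_{0}T$ and $S=c_{1}T$. For the main term, I would realize $I_{s}=I[V_{s}\leq p_{M,T}]$ by quantile coupling with a single infinite i.i.d.\ $\mathrm{Uniform}(0,1)$ sequence $(V_{s})$ independent of the data, and split
$$\widehat{Q}_{k_{1}^{0}}(\alpha)-p_{M,T}=\frac{1}{S}\sum_{s=1}^{S}\bigl(I[V_{s}\leq 1-\alpha]-(1-\alpha)\bigr)+O\bigl(|p_{M,T}-(1-\alpha)|\bigr).$$
The first term is an average of \emph{fixed-parameter} i.i.d.\ Bernoulli$(1-\alpha)$ variables, to which the classical Hartman--Wintner LIL applies to give limsup equal to $1$ (and liminf equal to $-1$) of the normalized sum. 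Combining with the negligible bias establishes (\ref{st1}).

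For (\ref{st2}) under $H_{1}$, Proposition \ref{theorem:nulldis} yields $M^{-1}\widehat{\Psi}_{k_{1}^{0},s}\overset{P^{\ast}}{\rightarrow}c_{1}>0$ a.s.\ in the data. Applying Hoeffding's inequality to the Bernoulli sum $\widehat{\nu}_{k_{1}^{0}}(u)$ shows that, on the event $\{\widehat{\Psi}_{k_{1}^{0},1}\leq c_{\alpha}\}$, the fluctuations of $\widehat{\nu}_{k_{1}^{0}}(u)$ must cancel a bias of order $\sqrt{M}$ on a set of positive $F$-measure, giving $p_{M,T}\leq\exp(-cM)$ for some $c>0$ eventually a.s. A second Hoeffding bound applied to $S^{-1}\sum_{s}I_{s}$ conditional on the data then gives $P^{\ast}(\widehat{Q}_{k_{1}^{0}}(\alpha)>\epsilon)\leq\exp(-c'S)$ for every $\epsilon>0$; since $S=c_{1}T$ makes this probability summable in $T$, Borel--Cantelli delivers $\widehat{Q}_{k_{1}^{0}}(\alpha)\leq\epsilon$ eventually a.s.

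The main obstacle is the bias control under $H_{0}$: producing a Berry--Esseen-type rate for the \emph{integrated} statistic $\widehat{\Psi}_{k_{1}^{0},s}$ (rather than for its one-dimensional marginals $\widehat{\nu}_{k_{1}^{0}}(u)$) calls for a functional CLT with quantitative rate, alongside careful handling of the exponentially small centering error and the integration against the weight $F$. Once this is in hand, the quantile coupling reduces the main LIL step to a textbook statement for a single i.i.d.\ Bernoulli$(1-\alpha)$ sequence, and the triangular-array dependence of $p_{M,T}$ on $T$ is absorbed into a correction that is already controlled by the bias estimate.
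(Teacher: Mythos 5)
Your architecture matches the paper's: condition on the data so that the indicators $I[\widehat{\Psi}_{k_{1}^{0},s}\leq c_{\alpha}]$ are i.i.d.\ Bernoulli with parameter $p_{M,T}=P^{\ast}(\widehat{\Psi}_{k_{1}^{0},1}\leq c_{\alpha})$, split $\widehat{Q}_{k_{1}^{0}}(\alpha)-(1-\alpha)$ into a fluctuation term plus the bias $p_{M,T}-(1-\alpha)$, show the bias is $o(\sqrt{\ln\ln S/S})$ using $S=O(M)$, and invoke the classical LIL for a fixed-parameter Bernoulli sequence. Your treatment of $H_{1}$ is if anything more quantitative than the paper's, which only uses $p_{M,T}\leq\epsilon$ (from Proposition \ref{theorem:nulldis}) together with a conditional variance bound of order $\epsilon S^{-1}$ and Lemma \ref{almostsure}, rather than Hoeffding plus Borel--Cantelli; both routes work.

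The one place where you have not actually closed the argument is precisely the step you flag as ``the main obstacle'': controlling $|p_{M,T}-(1-\alpha)|$ for the \emph{integrated} statistic. You propose a functional CLT with quantitative rates, which is much heavier machinery than needed and is not supplied. The paper avoids this entirely by reusing the decomposition from the proof of Proposition \ref{theorem:nulldis}: since $\int_{U}dF(u)=1$, one can write $\widehat{\Psi}_{k_{1}^{0},s}=X_{M,s}+Y_{M,s}$ with $X_{M,s}=\bigl(\tfrac{2}{\sqrt{M}}\sum_{m=1}^{M}[\widehat{\psi}_{k_{1}^{0},s}^{(m)}(0)-\tfrac12]\bigr)^{2}$ a function of a single one-dimensional sum of centred Bernoulli$(1/2)$ variables, and $E^{\ast}Y_{M,s}^{2}=O(\widehat{\phi}_{k_{1}^{0}}^{-1/2}+M\widehat{\phi}_{k_{1}^{0}}^{-1})$, which is negligible under $H_{0}$ by Theorem \ref{theorem:tildeMc} and (\ref{restrictM}). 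Then $|P^{\ast}(\widehat{\Psi}_{k_{1}^{0},s}\leq c_{\alpha})-P(Z^{2}\leq c_{\alpha})|$ is bounded by a Markov bound on $Y_{M,s}$, a mean-value-theorem bound for the $\epsilon$-shift of the threshold, and a one-dimensional (non-uniform) Berry--Esseen bound for $X_{M,s}$ --- no functional CLT required. If you substitute this elementary decomposition for your functional-CLT step, your proof is complete and coincides with the paper's.
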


Equations (\ref{st1}) and (\ref{st2}) complement (\ref{test-function}), and
quantify the gap in the asymptotic behaviour of $\widehat{Q}%
_{k_{1}^{0}}\left( \alpha \right) $ (or $\widetilde{Q}_{k_{1}^{0}}\left(
\alpha \right) $) according as the null $H_{0}$, or the alternative $H_{1}$,
is satisfied. According to the theorem, $\widehat{Q}_{k_{1}^{0}}\left(
\alpha \right) $\ drifts to zero under the alternative; conversely, (\ref%
{st1}) entails that, for sufficiently large $\left( p_{1},p_{2},T\right) $,%
\footnote{%
Formally, (\ref{st1}) states that there exists a triple of random variables $%
\left( p_{1,0},p_{2,0},T_{0}\right) $ such that (\ref{lil}) holds for all $%
\left( p_{1},p_{2},T\right) $ with $p_{1}\geq p_{1,0}$, $p_{2}\geq p_{2,0}$
and $T\geq T_{0}$.} $\widehat{Q}_{k_{1}^{0}}\left( \alpha \right) $\ is
bounded away from zero with lower bound%
\begin{equation}
\widehat{Q}_{k_{1}^{0}}\left( \alpha \right) \geq 1-\alpha -\sqrt{\alpha
\left( 1-\alpha \right) }\sqrt{\frac{2\ln \ln S}{S}}.  \label{lil}
\end{equation}

This gap can be exploited to construct a decision rule based on $\widehat{Q}%
_{k_{1}^{0}}\left( \alpha \right) $ (or $\widetilde{Q}_{k_{1}^{0}}\left(
\alpha \right) $), not rejecting the null when $\widehat{Q}%
_{k_{1}^{0}}\left( \alpha \right) $ (or $\widetilde{Q}_{k_{1}^{0}}\left(
\alpha \right) $) exceeds a threshold, and rejecting otherwise. In theory,
one could use the threshold defined in (\ref{st1}), but this, albeit valid
asymptotically, is likely to be overly conservative in finite samples. A
less conservative decision rule in favour of the null could be%
\begin{equation}
\widehat{Q}_{k_{1}^{0}}\left( \alpha \right) \geq \left( 1-\alpha \right)
-f\left( S\right) ,\ \ \text{or}\ \ \widetilde{Q}_{k_{1}^{0}}\left( \alpha
\right) \geq \left( 1-\alpha \right) -f\left( S\right)  \label{thresholds}
\end{equation}%
with $f\left( S\right) $ a user-specified, non-increasing function of $S$
such that
\begin{equation}
\lim_{S\rightarrow \infty }f\left( S\right) =0\text{ and }\lim
\sup_{S\rightarrow \infty }\left( f\left( S\right) \right) ^{-1}\sqrt{\frac{%
2\ln \ln S}{S}}=0.  \label{fs-function}
\end{equation}%
We call such a family of rules \textquotedblleft strong
rules\textquotedblright , since they originate from a \textquotedblleft
strong\textquotedblright\ result (the Law of the Iterated Logarithm). Whilst
we discuss possible choices of $f\left( S\right) $\ in Sections \ref%
{simulation} and \ref{empirical}, offering guidelines based on synthetic and
real data, here we note that a typical family of default choices for $%
f\left( S\right) $\ is
\begin{equation}
f\left( S\right) =S^{-q},  \label{choice-fs}
\end{equation}%
where $0<q<1/2$. On account of (\ref{thresholds}), it can be expected that
as $q$\ increases, $f\left( S\right) $\ vanishes more quickly, making the
threshold more exacting. As a consequence, tests become less and less
conservative, thus leading to a higher probability of rejection of the null
hypothesis; in turn, this results in a potential underestimation of $%
k_{1}^{0}$\ in finite samples. Conversely, lower values of $q$\ entail that $%
f\left( S\right) $\ is larger, whence more conservative tests and,
consequently, a higher probability of overstating the number of common
factors, at least in finite samples. Based on Monte Carlo evidence, our
recommended choice is based on using $q=1/4$\ in (\ref{choice-fs}).

Finally, and along similar lines as the comment above, the theorem only
requires that $M$\ and $S$\ be of the same order of magnitude as each other,
and that they are (at most) proportional to $T$. The choice of these tuning
parameters is explored in Section \ref{simulation} (where we show that
results are robust to these specifications); here, we note that a default
choice is $M=S=T$.

\subsection{Determining the number of common factors \label{sequential}}

The output of the decision rules proposed in (\ref{thresholds}) can be used
for two purposes. Firstly, it is possible to check whether $k_{1}=0$ or $%
k_{2}=0$: this entails that there exists no factor structure in the rows or
columns. Similarly, finding $k_{1}=k_{2}=0$ implies that there is no factor
structure along the row and/or column sections. As a second application of (%
\ref{thresholds}), upon finding that $k_{1}>0$ (or $k_{2}>0$), the
individual decision rules proposed above can be cast in a sequential
procedure to determine the number of common row and column factors, based on
$M_{c}$ (and $M_{r}$) and $\widetilde{M}_{c}$ (and $\widetilde{M}_{r}$)
respectively.

As mentioned above, using $\widetilde{M}_{c}$\ and $\widetilde{M}_{r}$\
should yield better results in the presence of a genuine two-way structure.
Hence, $\widetilde{M}_{c}$\ and $\widetilde{M}_{r}$ should be employed if $%
k_{2}>0$ and $k_{1}>0$ respectively.

\bigskip

The estimator of $k_{1}$ (denoted as $\widehat{k}_{1}$ when using $\widehat{%
\Psi }_{1}$ and $\widehat{Q}_{k_{1}^{0}}\left( \alpha \right) $, and $%
\widetilde{k}_{1}$ when using $\widetilde{\Psi }_{1}$ and $\widetilde{Q}%
_{k_{1}^{0}}\left( \alpha \right) $) is the output of the following
algorithm:

\begin{description}
\item[\textit{Step 1}] Run the test for $H_{0}:k_{1}\geq 1$ based on either $%
\widehat{Q}_{1}\left( \alpha \right) $ or $\widetilde{Q}_{1}\left( \alpha
\right) $. If the null is rejected with (\ref{thresholds}), set $\widehat{k}%
_{1}=0$\ (resp. $\widetilde{k}_{1}=0$) and stop, otherwise go to the next
step.

\item[\textit{Step 2}] For $j\geq 2$, run the test for $H_{0}:k_{1}\geq j$
based on either $\widehat{Q}_{j}\left( \alpha \right) $ or $\widetilde{Q}%
_{j}\left( \alpha \right) $, constructed using an artificial sample $\left\{
\eta _{j,s}^{(m)},1\leq m\leq M\right\} $ generated independently across $%
1\leq s\leq S$, and independently of $\left\{ \eta _{1,s}^{(m)},1\leq m\leq
M\right\} ,$ $...,$ $\left\{ \eta _{j-1,s}^{(m)},1\leq m\leq M\right\} $. If
the null is rejected with (\ref{thresholds}), set $\widehat{k}_{1}=j-1$\
(resp. $\widetilde{k}_{1}=j-1$) and stop; otherwise repeat step 2 until the
null is rejected, or until a pre-specified value $k_{\max }$ is reached.
\end{description}

The consistency of $\widehat{k}_{1}$ and $\widetilde{k}_{1}$ is stated in
the next theorem.

\begin{theorem}
\label{fwise-str} We assume that the assumptions of Theorem \ref{strong-rule}
are satisfied, and that $k_{1}\leq k_{\max }$. Then, as $\min \left\{
p_{1},p_{2},T,M,S\right\} \rightarrow \infty $, it holds that $P^{\ast
}\left( \widehat{k}_{1}=k_{1}\right) =1$ for\ almost all realisations of $%
\{X_{t},1\leq t\leq T\}$. If, further, $k_{2}>0$, then it holds that $%
P^{\ast }\left( \widetilde{k}_{1}=k_{1}\right) =1$, for\ almost all
realisations of $\{X_{t},1\leq t\leq T\}$.
\end{theorem}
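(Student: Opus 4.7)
The plan is to leverage Theorem \ref{strong-rule} directly at each step of the sequential algorithm, together with the fact that the algorithm terminates within at most $k_{\max}+1$ iterations. I will write only the argument for $\widehat{k}_1$; the case of $\widetilde{k}_1$ is identical in view of the fact that Theorem \ref{strong-rule} applies to both $\widehat{Q}_{k_1^0}(\alpha)$ and $\widetilde{Q}_{k_1^0}(\alpha)$.

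First, I observe that the event $\{\widehat{k}_1 = k_1\}$ can be decomposed as the intersection of two collections of events, which reflects the logic of the sequential rule: namely, (i) for every $j \in \{1,\ldots,k_1\}$, the null $H_0: k_1 \geq j$ is \emph{not} rejected, i.e. $\widehat{Q}_j(\alpha) \geq (1-\alpha) - f(S)$; and (ii) when $j = k_1+1$, the null $H_0: k_1 \geq k_1+1$ \emph{is} rejected, i.e. $\widehat{Q}_{k_1+1}(\alpha) < (1-\alpha) - f(S)$. Since $k_1 \leq k_{\max}$ is finite and bounded by a constant, this is a finite intersection of $k_1 + 1 \leq k_{\max}+1$ events, so it suffices to show that each one holds with $P^*$-probability one (asymptotically) for almost all realisations of $\{X_t,1\leq t\leq T\}$, and then apply a simple union bound over the finite collection.

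For collection (i), fix $j \leq k_1$. Since $k_1 \geq j$, the null $H_0: k_1 \geq j$ is true, so by (\ref{st1}) in Theorem \ref{strong-rule}, applied with the independent randomisation $\{\eta_{j,s}^{(m)}\}$, we have
\begin{equation*}
\bigl|\widehat{Q}_{j}(\alpha)-(1-\alpha)\bigr| = O_{a.s.}\!\left(\sqrt{\frac{2\ln\ln S}{S}}\right).
\end{equation*}
By the assumed behaviour of $f(S)$ (namely $f(S)\to 0$ together with $\limsup_{S\to\infty}(f(S))^{-1}\sqrt{2\ln\ln S/S}=0$), it follows that $\sqrt{2\ln\ln S/S} = o(f(S))$, so that $\widehat{Q}_{j}(\alpha) \geq (1-\alpha)-f(S)$ eventually, $P^*$-a.s., for almost all samples. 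For collection (ii), we are now under $H_1: k_1 < k_1+1$, and (\ref{st2}) yields $\widehat{Q}_{k_1+1}(\alpha)\leq\epsilon$ $P^*$-a.s. for every $\epsilon>0$, i.e. $\widehat{Q}_{k_1+1}(\alpha)\to 0$; since $f(S)\to 0$ while $\alpha\in(0,1)$ is fixed, we have $(1-\alpha)-f(S)\to 1-\alpha>0$, so eventually $\widehat{Q}_{k_1+1}(\alpha)<(1-\alpha)-f(S)$, $P^*$-a.s.

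The main technical subtlety, though not a deep one, is the interplay between two layers of randomness: the sample $\{X_t\}$ and the artificial randomisations $\{\eta_{j,s}^{(m)}\}$. The ``strong rule'' of Theorem \ref{strong-rule} is stated as an almost-sure statement under $P^*$ that holds for $P$-almost every realisation of $\{X_t\}$; the key point that makes the sequential argument clean is that at each step of the algorithm the randomisation $\{\eta_{j,s}^{(m)}\}$ is drawn independently across $j$ (as specified in \emph{Step 2} of the algorithm), so Theorem \ref{strong-rule} can be invoked at each $j$ separately, and finitely many such applications can be combined by a union bound since $k_{\max}$ is a fixed constant. Combining (i) and (ii) through this union bound yields $P^*(\widehat{k}_1=k_1)=1$ asymptotically for almost all $\{X_t,1\leq t\leq T\}$, which is exactly the claim; the argument for $\widetilde{k}_1$ is verbatim, using the second eigen-gap in Theorem \ref{theorem:tildeM1} in place of Theorem \ref{theorem:tildeMc} when invoking Theorem \ref{strong-rule}.
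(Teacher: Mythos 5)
Your proposal is correct and follows essentially the same route as the paper: both arguments reduce the claim to applying (\ref{st1}) at each step $j\leq k_{1}$ (where $H_{0}$ holds) and (\ref{st2}) at step $k_{1}+1$ (where $H_{1}$ holds), using the conditional independence of the randomisations across $j$ and the finiteness of $k_{\max}$ to combine finitely many such statements; the only cosmetic difference is that you bound the target event $\{\widehat{k}_{1}=k_{1}\}$ directly as an intersection, whereas the paper enumerates the complementary events $\{\widehat{k}_{1}=j\}$, $j\neq k_{1}$, and shows each has vanishing $P^{\ast}$-probability.
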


\subsection{Remarks\label{remarks}}

We discuss two aspects of our methodology: the estimation of the number of
row/column factors when the number of column/row factors is unknown \textit{%
a priori}, and the performance of our methodology in the presence of weak
factors.

\subsubsection{Estimation of $k_{1}$ when $k_{2}$ is unknown\label{unknown}}

The results above are based on the (implicit) assumption that, when
estimating the number of row column factors $k_{1}$, the number of column
common factors $k_{2}$\ is known (and vice versa). In practice, an estimate
of $k_{2}$\ is required prior to computing $\widehat{k}_{1}$\ or $\widetilde{%
k}_{1}$. This can be obtained using any of the available techniques, but for
argument's sake we focus on using $\widehat{k}_{2}$\ derived from using our
sequential approach based on $\widehat{\Psi }_{k_{2}^{0}}$: whilst
suboptimal (as our Monte Carlo shows), $\widehat{k}_{2}$\ is a consistent
estimator of $k_{2}$\ according to Theorem \ref{fwise-str}; further, its
implementation does not require any prior knowledge of $k_{1}$, thus being
\textquotedblleft ready to use\textquotedblright\ as an initial estimate of $%
k_{2}$.

Further, $\widetilde{M}_{c}$\ and $\widetilde{M}_{r}$ should be used only if
$k_{2}>0$ and $k_{1}>0$ respectively. If this is not known a priori, we
recommend using a two-stage approach to estimate $k_{2}$\ (or, respectively,
$k_{1}$). In the first step, the applied user should run the test for $%
H_{0}:k_{2}\geq 1$, based on $\widehat{\phi }_{k_{2}^{0}}$\ with $%
k_{2}^{0}=1 $; according to Theorem \ref{theorem:tildeM1}, this can be
applied irrespective of whether $k_{1}=0$\ or $k_{1}>0$, and the conclusions
from this test are, therefore, robust to the actual value of $k_{1}$. Upon
rejecting the null, this entails that $k_{2}^{0}=0$, and therefore a two-way
structure does not exists. Consequently, $k_{1}$\ should be determined using
$\widehat{\phi }_{k_{1}^{0}}$, based on the spectrum of the
\textquotedblleft flattened\textquotedblright\ matrix $M_{c}$. If,
conversely, the null $H_{0}:k_{2}\geq 1$\ is not rejected, then there is a
factor structure in the columns of $X_{t}$, and the applied user can use
either $\widehat{\phi }_{k_{1}^{0}}$\ or $\widetilde{\phi }_{k_{1}^{0}}$,
based on $M_{c}$\ and $\widetilde{M}_{c}$\ respectively. Indeed, when $%
k_{2}>0$, Theorems \ref{theorem:tildeMc} and \ref{theorem:tildeM1} stipulate
that both approaches are valid, although (as discussed above) it can be
expected that using the spectrum of $\widetilde{M}_{c}$\ should lead to
improvements. Indeed, upon finding that $k_{1}>0$, $k_{2}$\ can also be
re-estimated using $\widetilde{\phi }_{k_{2}^{0}}$, thus having a
(potentially) refined estimator; conversely, if $k_{1}$\ is found to be
zero, $k_{2}$\ can be estimated using $\widehat{\phi }_{k_{2}^{0}}$. As a
final remark, in Section \ref{k2wrong} in the Supplementary Material we
explore the case where $k_{2}=0$, but the applied user employs the
projection method anyway, e.g. due to an incorrect initial estimation of $%
k_{2}$, showing that results are anyway robust to this form of
mis-specification.

\bigskip

In the next result, we show that, when estimating $k_{1}$\ using $\widehat{k}%
_{2}$, the resulting estimator is consistent, and it preserves the same mode
of convergence as in Theorem \ref{fwise-str}.

\begin{corollary}
\label{fwise-corol} We assume that the assumptions of Theorem \ref{fwise-str}
are satisfied and that $\widehat{k}_{2}$\ is used as an estimator of $k_{2}$%
. Then, as $\min \left\{ p_{1},p_{2},T,M,S\right\} \rightarrow \infty $, it
holds that $P^{\ast }\left( \widehat{k}_{1}=k_{1}\right) =1$\ for\ almost
all realisations of $\{X_{t},1\leq t\leq T\}$. If, further, $\widehat{k}%
_{2}>0$, then it holds that $P^{\ast }\left( \widetilde{k}_{1}=k_{1}\right)
=1$, for\ almost all realisations of $\{X_{t},1\leq t\leq T\}$.
\end{corollary}

\subsubsection{Extensions to the case of weak factors\label{weak-factors}}

According to Assumption \ref{loading}, the row and column factors considered
in this paper are \textquotedblleft pervasive\textquotedblright\ or
\textquotedblleft strong\textquotedblright . Technically, this is due to the
fact that the (squared) $L_{2}$-norm of the loading matrices diverge at
rates $\left\Vert R^{\prime }R\right\Vert =\Omega \left( p_{1}\right) $ and $%
C^{\prime }C=\Omega \left( p_{2}\right) $. However, the literature on vector
factor models has recently considered the case of common factors where the
square of the $L_{2}$-norm of the loading matrix still diverges, but at a
rate slower than the cross-sectional dimension - in our context, this would
e.g. correspond to having $\left\Vert R^{\prime }R\right\Vert =\Omega \left(
p_{1}^{\alpha _{1}}\right) $, for some $0<\alpha _{1}<1$. The recent
contributions by \citet{uematsu2} and \citet{uematsu}, and the references
cited therein, offer a state-of-the-art discussion of the issue of
determining weak factors in the context of vector factor models. As far as
matrix factor models are concerned, the literature has also considered the
possible presence of weak factors: for example, \citet{wang2019factor}, %
\citet{chen2019constrained} and \citet{gaotsay} all consider weak factors
along both the row and column spaces, modelling their strength with an
approach similar to the one in this paper; and, as discussed in the
introduction, \citet{lam2021rank} allows for the presence of weak common
factors in the idiosyncratic error term, thereby allowing for factor-induced
cross-correlation.

Hence, in this section we briefly investigate how our methodology works in
the presence of weak factors. For the sake of a concise discussion, we focus
primarily on determining the presence of only one weak common factor ($%
k_{1}=1$) in the case where both the row and column common factors are
possibly weak.

\bigskip

In the context of factor models for matrix valued data, we assume that
Assumptions \ref{factors}-\ref{weight} above all hold, but we
replace/integrate Assumption \ref{loading} with the following

\begin{assumpB}
\label{loading-weak} Assumption \ref{loading} holds with part (ii) replaced
by $\Vert p_{1}^{-\alpha _{1}}R^{\prime }R-I_{k_{1}}\Vert \rightarrow 0$ as $%
p_{1}\rightarrow \infty $, for some $0<\alpha _{1}\leq 1$, and $\Vert
p_{2}^{-\alpha _{2}}C^{\prime }C-I_{k_{2}}\Vert \rightarrow 0$ as $%
p_{2}\rightarrow \infty $, for some $0<\alpha _{2}\leq 1$.
\end{assumpB}

This assumption summarizes the discussion at the beginning of this section:
the (squared) $L_{2}$-norm of the loading matrix $R$ diverges, at a rate
that is possibly \textit{lower} than $p_{1}$. Similarly, the column factors
can also be weak. Prior to reporting the main theoretical result, we offer a
heuristic preview of the main arguments. Repeating the proofs of our main
results, it can be shown that, for all $j\leq k_{1}$%
\begin{equation*}
\widehat{\lambda }_{j}=\Omega _{a.s.}\left( p_{1}^{\alpha _{1}}\right) \text{
\ and \ }\widetilde{\lambda }_{j}=\Omega _{a.s.}\left( p_{1}^{\alpha
_{1}}\right) ,
\end{equation*}%
whereas the conclusions of Theorems \ref{theorem:tildeMc} and \ref%
{theorem:tildeM1} still hold true for all $j>k_{1}$. All methodologies that
try to determine the number of common factors require some eigen-gap in the
second order matrices. This entails that - in the case of our methodology -
detection of weak factors is in principle possible as long as, as $\min
\left\{ p_{1},p_{2},T\right\} \rightarrow \infty $%
\begin{equation}
\frac{\widehat{\lambda }_{k_{1}}}{\widehat{\lambda }_{k_{1}+1}}\overset{a.s.}%
{\rightarrow }\infty .  \label{divergence}
\end{equation}

The following result summarises the ability of $\widehat{k}_{1}$ to estimate
the number of common factors in the presence of weak factors.

\begin{corollary}
\label{weak} We assume that the assumptions of Theorem \ref{sequential}
hold, with Assumption \ref{loading} replaced with Assumption \ref%
{loading-weak}. Then, as\textbf{\ }$\min \left\{ p_{1},p_{2},T,M,S\right\}
\rightarrow \infty $, if%
\begin{eqnarray}
p_{1}^{\alpha _{1}}p_{2}^{\alpha _{2}-1} &\rightarrow &\infty ,
\label{weak-restrictoin} \\
p_{1}^{\alpha _{1}-1}p_{2}^{\alpha _{2}-1/2}T^{1/2} &\rightarrow &\infty ,
\label{weak-restriction-1}
\end{eqnarray}%
then it holds that $P^{\ast }\left( \widehat{k}_{1}=k_{1}\right) =1$, for\
almost all realisations of $\{X_{t},1\leq t\leq T\}$.
\end{corollary}

The results in Corollary \ref{weak} can be read in conjunction with similar
results in \citet{wang2019factor}, \citet{chen2019constrained} and %
\citet{gaotsay}.\ Some comments on (\ref{weak-restrictoin}) and (\ref%
{weak-restriction-1}) are in order. A quick inspection of Theorem \ref%
{theorem:tildeM1} reveals that the leading eigenvalues of $M_{c}$ are
proportional to $\Omega \left( p_{1}^{\alpha _{1}}p_{2}^{\alpha
_{2}-1}\right) $; hence, (\ref{weak-restrictoin}) ensures that such
eigenvalues diverge, which is a necessary condition to find factors.
Equation (\ref{weak-restriction-1}) entails that (\ref{divergence}) holds,
and therefore the \textquotedblleft signal\textquotedblright\ associated to
common factors is not drown out by the estimation \textquotedblleft
noise\textquotedblright . In this respect, (\ref{weak-restriction-1})
states, heuristically, that when $p_{1}$ is \textquotedblleft too
big\textquotedblright\ (in comparison with the other dimensions, $p_{2}$ and
$T$), detection of weak factors is less easy.

Equation (\ref{weak-restriction-1}) can be illustrated through some
examples. Consider, for simplicity, $\alpha _{2}=1$ - that is, column
factors are strong. If $p_{1}=p_{2}=T$, then (\ref{weak-restriction-1})
boils down to $p_{1}^{\alpha _{1}}\rightarrow \infty $, which is always true
as long as $\alpha _{1}>0$: this entails that, in this case, arbitrarily
weak factors can be (potentially) detected. Indeed, in (\ref{stats}), $%
\widehat{\lambda }_{k_{1}^{0}}$ is dampened by a factor $p_{1}^{-\delta }$,
with - in this example - $\delta =\varepsilon $. Hence, it can be verified
that $\widehat{\phi }_{k_{1}^{0}}$ will diverge as long as $\alpha
_{1}>\varepsilon $: the choice of $\varepsilon $, which is entirely up to
the applied user, will determine which weak factors can be detected and
which ones will be left out. As another example, consider a
\textquotedblleft very long\textquotedblright\ matrix sequence, where e.g. $%
p_{1}=T^{1/2}$; in such a case, (\ref{weak-restriction-1}) becomes $%
p_{1}^{\alpha _{1}}p_{2}^{1/2}\rightarrow \infty $, which is satisfied even
when $p_{2}$ diverges very slowly (e.g., even if $p_{2}=\ln p_{1}$); in this
case, again, the choice of $\varepsilon $ will determine which weak factors
can be detected, and this is further enhanced the larger $T$ is. Conversely,
consider the case where the matrix sequence is \textquotedblleft very
short\textquotedblright , e.g. $T=p_{1}^{1/2}$. In such a case, (\ref%
{weak-restriction-1}) becomes $p_{1}^{\alpha _{1}-3/4}p_{2}^{1/2}\rightarrow
\infty $, which, if e.g. $p_{2}=p_{1}$, entails that detection is possible
only when $\alpha _{1}>1/4$: very weak factors cannot be detected in this
case. A similar phenomenon was also noted in the case of vector valued
series by \citet{trapani2018randomized}: however, in the case of matrix
valued series, a small $T$ can be offset by a large value of $p_{2}$, which
is a major advantage of having a matrix structure in the data. Other
examples can also be considered, but the general message is that detection
of weak factors is helped by both $T$ and $p_{2}$.

In the presence of common factors that are weak along the columns, i.e. when
$\alpha _{2}<1$, the interpretation of (\ref{weak-restriction-1}) is more
convoluted, but essentially the same. As mentioned above, large values of $%
p_{2}$ help the estimation of $k_{1}$: however, such helpfulness is dampened
when the common factors in the column are weak - in essence, because the
information coming from aggregating the columns is, itself, weak.

As a final remark, we note that the case of using $\widetilde{\lambda }%
_{k_{1}}$ is more complicated, essentially because the estimation error of $%
\widehat{C}$ is compounded (and inflated) by the presence of weak factors.
In the interest of brevity, we relegate the treatment of this case to Lemmas %
\ref{c-hat-weak} and \ref{weak-project1} in the Supplementary Material. The
latter result is, essentially, an equivalent of Theorem \ref{theorem:tildeMc}
in the presence of weak factors. An analogue restriction to (\ref%
{weak-restriction-1}) can be derived from Lemma \ref{weak-project1}; in
particular, equation (\ref{weak-proj}) in the lemma suggests that detection
of weak factors requires the necessary condition $p_{1}^{\alpha
_{1}}p_{2}^{\alpha _{2}-1}\rightarrow \infty $. Sufficient conditions,
similar to (\ref{weak-restriction-1}), can be derived from (\ref%
{weak-proj-22}). However, results are far more complicated, and of dubious
helpfulness. Technically, this is due to the fact that weak factors also
enter the projected estimator $\widehat{C}$, making it less precise (again,
due to the fact that estimation of $C$ is now based on a weaker
\textquotedblleft signal\textquotedblright ).

\section{Simulation studies\label{simulation}}

In this section, we evaluate the finite sample performances of our strong
rule to determine whether there is a factor structure, and of the sequential
procedure to estimate the number of common factors. As far as the latter is
concerned, we compare our Sequential Testing Procedures (henceforth denoted
as \textquotedblleft STP\textquotedblright ) with several competing
methodologies available in the literature.

\bigskip

We begin with describing the implementation of our procedures. For the
proposed STP, three different approaches can be adopted: firstly, the test
statistics can be constructed using the eigenvalues of $M_{c}$ (or $M_{r}$),
and we denote this approach as $\text{STP}_{1}$; secondly, the test
statistics can be constructed using the eigenvalues of $\widetilde{M}_{c}$
(or $\widetilde{M}_{r}$), using STP$_{1}$ as a preliminary step to estimate
e.g. $k_{2}$ and subsequently using the estimated value, $\widehat{k}_{2}$,
to construct the initial estimator $\widehat{C}$, and we denote this
approach as STP$_{3}$; and, finally, the test statistics can still be
constructed using the eigenvalues of $\widetilde{M}_{c}$ (or $\widetilde{M}%
_{r}$), but in order to avoid the (finite sample) risk of understating $%
k_{2} $ one can use a deliberately large number instead of the STP$_{1}$
estimator (we set $\widehat{k}_{2}=k_{\max }=8$), and we denote this
approach as STP$_{2}$.

When computing integrals such as $\int_{-\infty }^{\infty }\left[ \widehat{%
\nu }_{k_{1}^{0}}(u)\right] ^{2}dF(u)$, we use the distribution of the
standard normal as weight function $F\left( u\right) $, using a
Gauss-Hermite quadrature with%
\begin{equation}
\widehat{\Psi }_{k_{1}^{0}}=\sum_{s=1}^{n_{S}}w_{s}\widehat{\nu }%
_{k_{1}^{0}}(\sqrt{2}z_{s}).  \label{hermite}
\end{equation}%
In (\ref{hermite}), the $z_{s}$s, $1\leq s\leq n_{S}$, are the zeros (of the
physicist's version) of the Hermite polynomial $H_{n_{S}}\left( z\right) $
defined as%
\begin{equation}
H_{n_{S}}\left( z\right) =\left( -1\right) ^{n_{S}}\exp \left( x^{2}\right)
\frac{d^{n_{S}}}{dx^{n_{S}}}\exp \left( -x^{2}\right) ,  \label{her-poly}
\end{equation}%
and the weights $w_{s}$ are defined as
\begin{equation*}
w_{s}=\frac{2^{n_{S}-1}\left( n_{S}-1\right) !}{n_{S}\left[
H_{n_{S}-1}\left( z_{s}\right) \right] ^{2}}.
\end{equation*}%
Thus, when computing $\widehat{\nu }_{k_{1}^{0}}(u)$ in Step 2 of the
algorithm, we construct $n_{S}$ of these statistics, each using $u=\pm \sqrt{%
2}z_{s}$. The values of the roots $z_{s}$, and of the corresponding weights $%
w_{s}$, are tabulated e.g. in \citet{salzer}. In our case, we have used $%
n_{S}=4$, which corresponds to $w_{1}=w_{4}=0.05$ and $w_{2}=w_{3}=0.45$,
and $u_{1}=-u_{4}=2.4$ and $u_{2}=-u_{3}=0.7$.

\bigskip

\textit{Data generation}

\bigskip

We use the same Data Generating Process (DGP) as \citet{hkyz2021} in order
to generate $X_{t}$. Specifically, when $k_{1}\neq 0$ and $k_{2}\neq 0$,
i.e., when there is a factor structure, we generate the entries of $R$ and $%
C $ independently from the uniform distribution $\mathcal{U}(-1,1)$, and we
let%
\begin{eqnarray}
\text{Vec}\left( F_{t}\right) &=&\phi \text{Vec}\left( F_{t-1}\right) +\sqrt{%
\theta \left( 1-\phi ^{2}\right) }\text{Vec}\left( \epsilon _{t}\right) ,%
\text{ \ \ }\epsilon _{t}\sim i.i.d. \ \mathcal{N}\left( 0,I_{k_{1}\times
k_{2}}\right) ,  \label{equ41} \\
\text{Vec}\left( E_{t}\right) &=&\psi \text{Vec}\left( E_{t-1}\right) +\sqrt{%
1-\psi ^{2}}\text{Vec}\left( U_{t}\right) ,\text{ \ \ }U_{t}\sim i.i.d. \
\mathcal{N}\left( 0,V_{E}\otimes U_{E}\right) ,  \label{equ42}
\end{eqnarray}%
where $U_{E}$ and $V_{E}$ are matrices with ones on the diagonal, and the
off-diagonal entries are $a/p_{1}$ and $a/p_{2}$, respectively. The
parameter $a$ controls cross-sectional dependence, with larger $a$ leading
to stronger cross-dependence; we have used $a=2$ in our simulations. In the
case that no factor structure exists, we simply let $X_{t}=E_{t}$, with $%
E_{t}$ generated in the same way as in (\ref{equ41})-(\ref{equ42}). The
parameters $\phi $ and $\psi $ control both the temporal and cross-sectional
correlations of $X_{t}$; with nonzero $\phi $ and $\psi $, the generated
factors are temporally correlated while the idiosyncratic noises are both
temporally and cross-sectionally correlated. In all our experiments, we set $%
\phi =\psi =0.1$ and use $\theta =1$.\footnote{%
We have also tried $\phi =\psi =0.3$, and results are essentially the same.}
In all the simulation settings, the reported results are based on $500$
replications.

As a final remark, our DGP entails that, letting $\Sigma _{E}=E\left( \text{%
Vec}\left( E_{t}\right) \text{Vec}^\prime \left( E_{t}\right) \right) $, and
$\Sigma _{C}=E\Big( \text{Vec}\left( RF_{t}C^{\prime }\right)$ $\text{Vec}%
^{\prime }\left( RF_{t}C^{\prime }\right) \Big) $, the signal-to-noise ratio
is given by%
\begin{equation}
\sigma =\frac{\sum_{i,j=1}^{p_{1}p_{2}}\left\{ \Sigma _{C}\right\} _{ij}}{%
\sum_{i,j=1}^{p_{1}p_{2}}\left\{ \Sigma _{E}\right\} _{ij}}\simeq \theta
\frac{k_{1}k_{2}}{9\left( 1+a\right) ^{2}},  \label{stn}
\end{equation}%
for large values of $p_{1}$ and $p_{2}$. We point out that, in unreported
simulations we have tried to alter $\theta $; as expected, when this
increases all criteria improve (this is particularly evident for the
estimator proposed by \citealp{lam2021rank}), but the relative performance
remains unaltered.

\bigskip

\textit{Determining whether there is a factor structure}

\bigskip

We investigate the finite sample performance of our \textquotedblleft
strong\textquotedblright\ rule to determine whether a factor structure
exists in the matrix time-series data. This offers a solution to the
question in the discussion section of \citet{hkyz2021}: \textit{is the
matrix factor structure true for the time series?}

\begin{table}[h]
\caption{Proportions of correctly determining whether there exists factor
structure using $\widehat{\Psi }_{1}^{S}$ and $\widetilde{\Psi }_{1}^{S}$
over 500 replications with $M=S=300,\protect\phi=\protect\psi%
=0.1,f(S)=S^{-1/4}$. }
\label{tab:main1}\renewcommand{\arraystretch}{0.2} \centering
\selectfont
\begin{threeparttable}
   \scalebox{1}{\begin{tabular*}{16cm}{ccccccccccccccccccccccccccccc}
\toprule[2pt]
&\multirow{2}{*}{$\alpha$} &\multirow{2}{*}{$(k_1,k_2)$}  &\multirow{2}{*}{Method}  &\multicolumn{3}{c}{$(p_1,T)=(100,100)$}&\multicolumn{3}{c}{$(p_1,T)=(150,150)$} \cr
\cmidrule(lr){5-7} \cmidrule(lr){8-10}
&  &&                 &$p_2=15$     &$p_2=20$      &$p_2=30$     &$p_2=15$     &$p_2=20$      &$p_2=30$   \\
\midrule[1pt]
&0.01&(0,0)   &$\widehat{\Psi}_{1}^S$   &1 &1 &1  &1 &1 &1\\
&&       &$\widetilde{\Psi}_{1}^S$ &1 &1 &1  &1 &1 &1\\
&&       &Vec &1 &1 &1  &1 &1 &1\\
\cmidrule(lr){3-10}
&&(1,1)    &$\widehat{\Psi}_{1}^S$   &0.648 &0.838 &0.97  &0.850 &0.946 &0.998\\
&&        &$\widetilde{\Psi}_{1}^S$ &0.970 &0.998 &1.00  &0.996 &1.000 &1.000\\
&&       &Vec &0.000 &0.000 &0.00  &0.000 &0.000 &0.000\\
\cmidrule(lr){3-10}
&&(1,3)    &$\widehat{\Psi}_{1}^S$   &1 &1 &1  &1 &1 &1\\
&&        &$\widetilde{\Psi}_{1}^S$ &1 &1 &1  &1 &1 &1\\
&&       &Vec &0 &0 &0  &0 &0 &0\\
\midrule[1pt]
&0.05&(0,0)   &$\widehat{\Psi}_{1}^S$   &1 &1 &1  &1 &1 &1\\
&&       &$\widetilde{\Psi}_{1}^S$ &1 &1 &1  &1 &1 &1\\
&&       &Vec &1 &1 &1  &1 &1 &1\\
\cmidrule(lr){3-10}
&&(1,1)    &$\widehat{\Psi}_{1}^S$   &0.334 &0.574 &0.904  &0.644 &0.84 &0.984\\
&&        &$\widetilde{\Psi}_{1}^S$ &0.886 &0.992 &1.000  &0.982 &1.00 &1.000\\
&&       &Vec &0.000 &0.000 &0.000  &0.000 &0.00 &0.000\\
\cmidrule(lr){3-10}
&&(1,3)    &$\widehat{\Psi}_{1}^S$   &1 &1 &1  &1 &1 &1\\
&&        &$\widetilde{\Psi}_{1}^S$ &1 &1 &1  &1 &1 &1\\
&&       &Vec &0 &0 &0  &0 &0 &0\\
\midrule[1pt]
&0.10&(0,0)   &$\widehat{\Psi}_{1}^S$   &1 &1 &1  &1 &1 &1\\
&&       &$\widetilde{\Psi}_{1}^S$ &1 &1 &1  &1 &1 &1\\
&&       &Vec &1 &1 &1  &1 &1 &1\\
\cmidrule(lr){3-10}
&&(1,1)    &$\widehat{\Psi}_{1}^S$   &0.210 &0.422 &0.786  &0.52 &0.724 & 0.966\\
&&        &$\widetilde{\Psi}_{1}^S$ &0.812 &0.992 &1.000  &0.96 &1.000 &1.000\\
&&       &Vec &0.000 &0.000 &0.000  &0.00 &0.000 &0.000\\
\cmidrule(lr){3-10}
&&(1,3)    &$\widehat{\Psi}_{1}^S$   &1 &1 &1  &1 &1 &1\\
&&        &$\widetilde{\Psi}_{1}^S$ &1 &1 &1  &1 &1 &1\\
&&       &Vec &0 &0 &0  &0 &0 &0\\
\bottomrule[2pt]
  \end{tabular*}}
  \end{threeparttable}
\end{table}

\begin{table}[h]
\caption{Proportions of correctly determining whether there exists factor
structure using $\widehat{\Psi }_{1}^{S}$ and $\widetilde{\Psi }_{1}^{S}$
over 500 replications with $M=S=300,\protect\phi= \protect\psi%
=0.1,f(S)=S^{-1/4}$. }
\label{tab:main11}\renewcommand{\arraystretch}{0.2} \centering
\selectfont
\begin{threeparttable}
		\scalebox{0.75}{\begin{tabular*}{22.5cm}{ccccccccccccccccccccccccccccc}
				\toprule[2pt]
				&\multirow{2}{*}{$\alpha$} &\multirow{2}{*}{$(k_1,k_2)$}  &\multirow{2}{*}{Method}  &\multicolumn{3}{c}{$T=50$}&\multicolumn{3}{c}{$T=100$} \cr
				\cmidrule(lr){5-7} \cmidrule(lr){8-10}
				&  &&                 &$p_1=p_2=50$     &$p_1=p_2=100$      &$p_1=p_2=150$     &$p_1=p_2=50$     &$p_1=p_2=100$      &$p_1=p_2=150$   \\
				\midrule[1pt]
				&0.01&(0,0)   &$\widehat{\Psi}_{1}^S$   &1.000 &1 &1  &1.000 &1.000 &1.000\\
				&&       &$\widetilde{\Psi}_{1}^S$ &0.142 &1 &1  &0.766 &0.166 &0.988\\
				&&       &Vec &1.000 &1 &1  &1.000 &1.000 &1.000\\
				\cmidrule(lr){3-10}
				&&(1,1)    &$\widehat{\Psi}_{1}^S$   &0.986 &1 &1  &0.99 &1 &1\\
				&&        &$\widetilde{\Psi}_{1}^S$ &1.000 &1 &1  &1.00 &1 &1\\
				&&       &Vec &0.000 &0 &0  &0.00 &0 &0\\
				\cmidrule(lr){3-10}
				&&(1,3)    &$\widehat{\Psi}_{1}^S$   &1 &1 &1  &1 &1 &1\\
				&&        &$\widetilde{\Psi}_{1}^S$ &1 &1 &1  &1 &1 &1\\
				&&       &Vec &0 &0 &0  &0 &0 &0\\
\hline
				&0.05&(0,0)   &$\widehat{\Psi}_{1}^S$   &1.000  &1 &1  &1 &1.000 &1\\
				&&       &$\widetilde{\Psi}_{1}^S$ &0.938  &1 &1  &1 &0.996 &1\\
				&&       &Vec &1.000 &1 &1  &1 &1.000 &1\\
				\cmidrule(lr){3-10}
				&&(1,1)    &$\widehat{\Psi}_{1}^S$   &0.888 &0.998 &1  &0.936 &1 &1\\
				&&        &$\widetilde{\Psi}_{1}^S$ &1.000 &1.000 &1  &1.000 &1 &1\\
				&&       &Vec &0.000 &0.000 &0  &0.000 &0 &0\\
				\cmidrule(lr){3-10}
				&&(1,3)    &$\widehat{\Psi}_{1}^S$   &1 &1 &1  &1 &1 &1\\
				&&        &$\widetilde{\Psi}_{1}^S$ &1 &1 &1  &1 &1 &1\\
				&&       &Vec &0 &0 &0  &0 &0 &0\\
\hline
				&0.10&(0,0)   &$\widehat{\Psi}_{1}^S$   &1.000 &1 &1  &1 &1 &1\\
				&&       &$\widetilde{\Psi}_{1}^S$ &0.998 &1 &1  &1 &1 &1\\
				&&       &Vec &1.000 &1 &1  &1 &1 &1\\
				\cmidrule(lr){3-10}
				&&(1,1)    &$\widehat{\Psi}_{1}^S$   &0.794 &0.972 &1  &0.846 &1 &1\\
				&&        &$\widetilde{\Psi}_{1}^S$ &1.000 &1.000 &1  &1.000 &1 &1\\
				&&       &Vec &0.000 &0.000 &0  &0.000 &0 &0\\
				\cmidrule(lr){3-10}
				&&(1,3)    &$\widehat{\Psi}_{1}^S$   &1 &1 &1  &1 &1 &1\\
				&&        &$\widetilde{\Psi}_{1}^S$ &1 &1 &1  &1 &1 &1\\
				&&       &Vec &0 &0 &0  &0 &0 &0\\

				\bottomrule[2pt]
		\end{tabular*}}
	\end{threeparttable}
\end{table}

We study two scenarios: first, the case of \textquotedblleft
small\textquotedblright\ $p_{2}$, using $p_{1}=T=\{100,150\}$ and $%
p_{2}=\{15,20,30\}$; second, the more balanced cases $\left( p1,p2,T\right)
=\left( 50,50,50\right) $, $\left( 100,100,50\right) $, $\left(
150,150,50\right) $, $\left( 50,50,100\right) $, $\left( 100,100,100\right) $%
, and $\left( 150,150,100\right) $.\footnote{%
In Section \ref{furthermc} in the supplement, we complement these results
with two more sets of experiments, considering $p_{1}$ and $p_{2}$ being
comparable and considering $p_{1}$ and $p_{2}$ being comparable and small,
respectively. Results are broadly similar to the ones reported here.} In all
our simulations, we use $k_{\max }=8$, although we tried different values
for $k_{\max }$ and the results show that the proposed methods are not
sensitive to the choice of it. As far as our decision rule is concerned, we
have used $\alpha \in \{0.01,0.05,0.10\}$, $M=300$ and $S=300$; in (\ref%
{thresholds}), we have used $f\left( S\right) =S^{-1/4}$. Results using
different combinations of $(\alpha ,M,S)$ and different choice of $f\left(
S\right) $ are in the Supplementary Material. By way of comparison, we also
use the test developed in \citet{trapani2018randomized}, applying it to the $%
p_{1}p_{2}\times 1$ series Vec$\left( X_{t}\right) $ - this is denoted by
\textquotedblleft \textrm{Vec}\textquotedblright\ in our tables.

\bigskip

Firstly, we consider the case of the existence of a matrix factor structure
by setting $(k_{1},k_{2})=\{(1,1),(1,3)\}$ in (\ref{equ41}). We report the
proportions of correctly claiming that there exists factor structure by the
\textquotedblleft strong\textquotedblright\ rule in the second and third
rows of each subpanel of Tables \ref{tab:main1} and \ref{tab:main11}; $%
\widetilde{\Psi }_{1}^{S}$ and $\widehat{\Psi }_{1}^{S}$ denote our
procedure with and without projection technique, respectively. The results
indicate that, when using $\widetilde{\Psi }_{1}^{S}$, a factor structure is
found more than $95\%$ of the times whenever $p_{1}\geq 100$, with the sole
exception of the (small sample) case $\left( p_{1},p_{2},T\right) =\left(
100,15,100\right) $. Results are always worse when $\widehat{\Psi }_{1}^{S}$
is used, although improvements are seen for larger sample sizes (see Table %
\ref{tab:main11}): this reinforces the case in favour of the projection
method developed by \citet{hkyz2021}, especially when $p_{2}$ is small. When
$p_{1}=50$, results are essentially the same, with few exceptions. All
across the board, results obtained using the test by %
\citet{trapani2018randomized} are very bad, indicating that, in essence, the
test always fails to detect the existence of a factor structure when this is
present. This result is not entirely unexpected: based on %
\citet{trapani2018randomized}, eigenvalues are scaled by a factor $\left(
p_{1}p_{2}\right) ^{-\delta }$, where $\delta =0.685$ in the
\textquotedblleft best\textquotedblright\ case where $p_{1}=100$ and $%
p_{2}=15$. Indeed, $\delta $ increases (by construction) with $p_{2}$, thus
resulting in dampening eigenvalues even more. In turn, this makes detection
of diverging eigenvalues particularly difficult.

Secondly, we investigate the performance of our proposed methodology when
there is no factor structure in the matrix time series $X_{t}=E_{t}$, i.e. $%
(k_{1},k_{2})=(0,0)$, with $E_{t}$ generated in the same way as in (\ref%
{equ41}). In the first row of each subpanel of Tables \ref{tab:main1} and %
\ref{tab:main11}, we report the proportions of correctly claiming that there
exists no factor structure, from which we can conclude that our proposed
methodology is extremely powerful in identifying the absence of a factor
structure, even in the small sample case $(p_{1},p_{2},T)=(100,15,100)$, and
with or without projection. There are some puzzling exceptions in Table \ref%
{tab:main11}, especially when using $\alpha =0.01$, but these issues vanish
as the sample size increases. We note again that results obtained using the
test by \citet{trapani2018randomized} are, in this case, very satisfactory,
but this is clearly a spurious effect due to the reasons discussed above.

In the Supplementary Material, we report more results obtained under
different scenarios, which reinforce our conclusions - see e.g. Table \ref%
{tab:r3q71}, where we consider smaller values of $p_{1}$ and $p_{2}$. In
Tables \ref{tab:supp1} and \ref{tab:supp2}, we report results based on
different combinations of $(\alpha ,M,S)$. Results are essentially the same
when using $\widetilde{\Psi }_{1}^{S}$, whereas $\widehat{\Psi }_{1}^{S}$ is
more sensitive (at least in small samples) to the choice of $M$ and - albeit
to a lesser extent - $S$. In particular, as far as the former is concerned,
smaller values of it seem to yield better results in small samples. Finally,
in Table \ref{tab:supp3} in the Supplementary Material, we assess the
sensitivity to $f\left( S\right) $; as can be expected, results are affected
by the choice of the threshold, but this is only marginal when using $%
\widetilde{\Psi }_{1}^{S}$ and, again, more pronounced when using $\widehat{%
\Psi }_{1}^{S}$.

\bigskip

\textit{Determining the number of common factors}

\bigskip

We investigate the finite sample performances of the sequential testing
procedure introduced in Section \ref{sequential}. We use the same design,
and consider the same combinations of $\left( p_{1},p_{2},T\right) $ as in
the previous set of experiments. In order to evaluate the sequential
procedure, we use it to estimate the number of row factors $k_{1}$
considering the cases $(k_{1},k_{2})=\{(1,1),(1,3),(3,1),(3,3)\}$. As in the
previous section, we use $M=300$ and $S=300$, and $f\left( S\right)
=S^{-1/4} $, and we only report results for the case $\alpha =0.01$ for
brevity.

As well as assessing the performance of the STP methods, we compare these
against the most popular alternatives in the literature. We have considered
the following techniques:\footnote{%
Details on how each procedure has been implemented are in Section \ref%
{mc-notes} in the Supplement.} the Iterative Eigenvalue-Ratio (denoted as
\textrm{IterER}) studied in \citet{hkyz2021}; the $\alpha $-PCA
Eigenvalue-Ratio method (denote as $\alpha $\textrm{-PCA}) proposed by %
\citet{fan2021}; the iterative versions of the Eigenvalue Ratio and of the
Information Criteria algorithms by \citet{han2020rank} (here denoted as
\textrm{iTIP-ER} and \textrm{iTIP-IC} respectively); and the method proposed
by \citet{lam2021rank} (denoted as \textrm{TCorTh}). Finally, we have used
the information criterion $PC_{p1}\left( k\right) $ proposed in %
\citet{baing02}, applied to a vectorised version of $X_{t}$, to determine
the total number of factors $k=k_{1}k_{2}$ (this is denoted by \textrm{IC}).

\begin{table}[!h]
\caption{Simulation results for estimating $k_1$ in the form $x(y|z)$, $x$
is the sample mean of the estimated factor numbers based on 500 replications
$\protect\alpha =0.01,M=S=300,\protect\phi=\protect\psi=0.1,f(S)=S^{-1/4}$, $%
y$ and $z$ are the proportions of under- and exact- determination of the
factor number, respectively. }
\label{tab:main2}\renewcommand{\arraystretch}{1.5} \centering
\selectfont
\begin{threeparttable}
   \scalebox{0.65}{\begin{tabular*}{26cm}{ccccccccccccccccccccccccccccc}
\toprule[2pt]
&\multirow{2}{*}{$(k_1,k_2)$}&\multirow{2}{*}{Method}  &\multicolumn{3}{c}{$(p_1,T)=(100,100)$}&\multicolumn{3}{c}{$(p_1,T)=(150,150)$} \cr
\cmidrule(lr){4-6} \cmidrule(lr){7-9}
&&                 &$p_2=15$     &$p_2=20$       &$p_2=30$    &$p_2=15$     &$p_2=20$         &$p_2=30$ \\
\midrule[1pt]
&(1,1)   &$\text{STP}_{1}$     &$0.648 ( 0.352 | 0.648 )$	 &$0.838 ( 0.162 | 0.838 )$		 &$0.97 ( 0.03 | 0.97 )$	 &$0.85 ( 0.15 | 0.85 )$	&$0.946 ( 0.054 | 0.946 )$ &$0.998 ( 0.002 | 0.998 )$\\
&      &$\text{STP}_{2}$     &$0.968 ( 0.032 | 0.968 )$	 &$0.998 ( 0.002 | 0.998 )$		 &$1 ( 0 | 1 )$	 &$0.998 ( 0.002 | 0.998 )$	 &$1 ( 0 | 1 )$  &$1 ( 0 | 1 )$\\
&      &$\text{STP}_{3}$     &$0.706 ( 0.294 | 0.706 )$	 &$0.914 ( 0.086 | 0.914 )$		 &$0.994 ( 0.008 | 0.99 )$	 &$0.758 ( 0.242 | 0.758 )$	 &$0.942 ( 0.058 | 0.942 )$  &$0.994 ( 0.006 | 0.994 )$\\
&      &IterER               &$0.474 ( 0.526 | 0.474 )$	 &$0.766 ( 0.234 | 0.766 )$  &$0.972 ( 0.028 | 0.972 )$	 &$0.544 ( 0.456 | 0.544 )$ &$0.82 ( 0.18 | 0.82 )$  &$0.988 ( 0.012 | 0.988 )$\\
&      &$\alpha$-PCA         &$0.796 ( 0.204 | 0.796 )$	  &$0.832 ( 0.168 | 0.832 )$	 &$0.884 ( 0.116 | 0.884 )$	 &$0.956 ( 0.044 | 0.956 )$ &$0.962 ( 0.038 | 0.962 )$  &$0.986 ( 0.014 | 0.986 )$\\
&      &iTIP-IC               &$1 ( 0 | 1 )$	 &$1 ( 0 | 1 )$  &$1 ( 0 | 1 )$	 &$1 ( 0 | 1 )$ &$1 ( 0 | 1 )$  &$1 ( 0 | 1 )$\\
&      &iTIP-ER               &$1.368 ( 0 | 0.636 )$	 &$1.35 ( 0 | 0.65 )$  &$1.336 ( 0 | 0.664 )$	 &$1.39 ( 0 | 0.61 )$ &$1.328 ( 0 | 0.672 )$  &$1.352 ( 0 | 0.648 )$\\
&      &TCorTh         &$2 ( 0 | 0 )$	  &$2 ( 0 | 0 )$	 &$2 ( 0 | 0 )$	 &$2 ( 0 | 0 )$ &$2 ( 0 | 0 )$  &$2 ( 0 | 0 )$\\
&      &IC         &$1 ( 0 | 1 )$	  &$1 ( 0 | 1 )$	 &$1 ( 0 | 1 )$	 &$1 ( 0 | 1 )$ &$1 ( 0 | 1 )$  &$1 ( 0 | 1 )$\\
\cmidrule(lr){2-9}
&(1,3)   &$\text{STP}_{1}$     &$1 ( 0 | 1 )$	 &$1 ( 0 | 1 )$		 &$1 ( 0 | 1 )$	 &$1 ( 0 | 1 )$	 &$1 ( 0 | 1 )$ &$1 ( 0 | 1 )$\\
&      &$\text{STP}_{2}$     &$1 ( 0 | 1 )$	 &$1 ( 0 | 1 )$		 &$1 ( 0 | 1 )$	 &$1 ( 0 | 1 )$	 &$1 ( 0 | 1 )$  &$1 ( 0 | 1 )$\\
&      &$\text{STP}_{3}$     &$0.988 ( 0.012 | 0.988 )$	 &$1 ( 0 | 1 )$		 &$1 ( 0 | 1 )$	 &$0.996 ( 0.004 | 0.996 )$	 &$1 ( 0 | 1 )$  &$1 ( 0 | 1 )$\\
&      &IterER               &$0.97 ( 0.03 | 0.97 )$	 &$0.996 ( 0.004 | 0.996 )$  &$1 ( 0 | 1 )$	 &$0.996 ( 0.004 | 0.996 )$ &$1 ( 0 | 1 )$  &$1 ( 0 | 1 )$\\
&      &$\alpha$-PCA         &$1 ( 0 | 1 )$	  &$1 ( 0 | 1 )$	 &$1 ( 0 | 1 )$	 &$1 ( 0 | 1 )$ &$1 ( 0 | 1 )$  &$1 ( 0 | 1 )$\\
&      &iTIP-IC               &$1 ( 0 | 1 )$	 &$1 ( 0 | 1 )$  &$1 ( 0 | 1 )$	 &$1 ( 0 | 1 )$ &$1 ( 0 | 1 )$  &$1 ( 0 | 1 )$\\
&      &iTIP-ER               &$1.186 ( 0 | 0.814 )$	 &$1.174 ( 0 | 0.826 )$  &$1.144 ( 0 | 0.856 )$	 &$1.12 ( 0 | 0.88 )$ &$1.098 ( 0 | 0.902 )$  &$1.098 ( 0 | 0.902 )$\\
&      &TCorTh         &$2 ( 0 | 0 )$	  &$2 ( 0 | 0 )$	 &$2 ( 0 | 0 )$	 &$2 ( 0 | 0 )$ &$2 ( 0 | 0 )$  &$2 ( 0 | 0 )$\\
&      &IC         &$2.858 ( 0.142 | 0.858 )$	  &$2.954 ( 0.046 | 0.954 )$	 &$2.988 ( 0.012 | 0.988 )$	 &$2.968 ( 0.032 | 0.968 )$ &$2.994 ( 0.006 | 0.994 )$  &$3 ( 0 | 1 )$\\
\cmidrule(lr){2-9}
&(3,1)   &$\text{STP}_{1}$     &$2.106 ( 0.338 | 0.662 )$	 &$2.732 ( 0.12 | 0.88 )$		 &$2.994 ( 0.002 | 0.998 )$	 &$2.71 ( 0.106 | 0.894 )$	&$2.946 ( 0.018 | 0.982 )$ &$3 ( 0 | 1 )$\\
&      &$\text{STP}_{2}$     &$2.944 ( 0.024 | 0.976 )$	 &$3 ( 0 | 1 )$		 &$3 ( 0 | 1 )$	 &$2.982 ( 0.006 | 0.994 )$	 &$3 ( 0 | 1 )$  &$3 ( 0 | 1 )$\\
&      &$\text{STP}_{3}$     &$3 ( 0 | 1 )$	 &$3 ( 0 | 1 )$		 &$3 ( 0 | 1 )$	 &$2.994 ( 0.002 | 0.998 )$	 &$3 ( 0 | 1 )$  &$3 ( 0 | 1 )$\\
&      &IterER               &$2.988 ( 0.004 | 0.996 )$	 &$3 ( 0 | 1 )$  &$3 ( 0 | 1 )$	 &$2.982 ( 0.006 | 0.994 )$ &$3 ( 0 | 1 )$  &$3 ( 0 | 1 )$\\
&      &$\alpha$-PCA         &$2.404 ( 0.2 | 0.8 )$	  &$2.494 ( 0.17 | 0.83 )$	 &$2.766 ( 0.084 | 0.902 )$	 &$2.862 ( 0.046 | 0.954 )$ &$2.94 ( 0.02 | 0.98 )$  &$2.934 ( 0.022 | 0.978 )$\\
&      &iTIP-IC               &$1 ( 1 | 0 )$	 &$1 ( 1 | 0 )$  &$1 ( 1 | 0 )$	 &$1 ( 1 | 0 )$ &$1 ( 1 | 0 )$  &$1 ( 1 | 0 )$\\
&      &iTIP-ER               &$2.646 ( 0.446 | 0.346 )$	 &$2.692 ( 0.448 | 0.312 )$  &$2.642 ( 0.436 | 0.376 )$	 &$2.736 ( 0.45 | 0.328 )$ &$2.924 ( 0.376 | 0.364 )$  &$2.782 ( 0.37 | 0.424 )$\\
&      &TCorTh         &$3.986 ( 0 | 0.014 )$	  &$3.988 ( 0 | 0.012 )$	 &$3.988 ( 0 | 0.012 )$	 &$3.988 ( 0 | 0.012 )$ &$4 ( 0 | 0 )$  &$3.998 ( 0 | 0.002 )$\\
&      &IC         &$2.994 ( 0.004 | 0.996 )$	  &$2.994 ( 0.006 | 0.994 )$	 &$3 ( 0 | 1 )$	 &$2.996 ( 0.002 | 0.998 )$ &$3 ( 0 | 1 )$  &$3 ( 0 | 1 )$\\
\cmidrule(lr){2-9}
&(3,3)   &$\text{STP}_{1}$     &$3 ( 0 | 1 )$	 &$3 ( 0 | 1 )$		 &$3 ( 0 | 1 )$	 &$3 ( 0 | 1 )$	 &$3 ( 0 | 1 )$		 &$3 ( 0 | 1 )$\\
&      &$\text{STP}_{2}$     &$3 ( 0 | 1 )$	 &$3 ( 0 | 1 )$		 &$3 ( 0 | 1 )$	 &$3 ( 0 | 1 )$	 &$3 ( 0 | 1 )$		 &$3 ( 0 | 1 )$\\
&      &$\text{STP}_{3}$     &$3 ( 0 | 1 )$	 &$3 ( 0 | 1 )$		 &$3 ( 0 | 1 )$	 &$3 ( 0 | 1 )$	 &$3 ( 0 | 1 )$  &$3 ( 0 | 1 )$\\
&      &IterER               &$3 ( 0 | 1 )$	 &$3 ( 0 | 1 )$  &$3 ( 0 | 1 )$	 &$3 ( 0 | 1 )$	 &$3 ( 0 | 1 )$  &$3 ( 0 | 1 )$\\
&      &$\alpha$-PCA         &$3 ( 0 | 1 )$	  &$3 ( 0 | 1 )$	 &$3 ( 0 | 1 )$	 &$3 ( 0 | 1 )$	  &$3 ( 0 | 1 )$	 &$3 ( 0 | 1 )$\\
&      &iTIP-IC               &$1 ( 1 | 0 )$	 &$1 ( 1 | 0 )$  &$1 ( 1 | 0 )$	 &$1 ( 1 | 0 )$	 &$1 ( 1 | 0 )$  &$1 ( 1 | 0 )$\\
&      &iTIP-ER               &$2.534 ( 0.426 | 0.478 )$	 &$2.69 ( 0.366 | 0.48 )$  &$2.71 ( 0.324 | 0.57 )$	 &$2.718 ( 0.304 | 0.606 )$	 &$2.824 ( 0.23 | 0.686 )$  &$2.842 ( 0.218 | 0.694 )$\\
&      &TCorTh         &$3.004 ( 0 | 0.996 )$	  &$3.002 ( 0 | 0.998 )$	 &$3.002 ( 0 | 0.998 )$	 &$3.012 ( 0 | 0.988 )$	  &$3 ( 0 | 1 )$	 &$3 ( 0 | 1 )$\\
&      &IC         &$8.458 ( 0.33 | 0.67 )$	  &$8.69 ( 0.234 | 0.766 )$	 &$8.908 ( 0.076 | 0.924 )$	 &$8.89 ( 0.064 | 0.936 )$	  &$8.972 ( 0.022 | 0.978 )$	 &$8.996 ( 0.004 | 0.996 )$\\
\bottomrule[2pt]
  \end{tabular*}}
  \end{threeparttable}
\end{table}
\vspace{1em}

\begin{table}[!h]
\caption{Simulation results for estimating $k_1$ in the form $x(y|z)$, $x$
is the sample mean of the estimated factor numbers based on 500 replications
$\protect\alpha =0.01,M=S=300,\protect\phi=\protect\psi=0.1,f(S)=S^{-1/4}$,
are the proportions of under- and exact- determination of the factor number,
respectively. }
\label{tab:main21}\renewcommand{\arraystretch}{1.5} \centering
\selectfont
\begin{threeparttable}
		\scalebox{0.65}{\begin{tabular*}{26cm}{ccccccccccccccccccccccccccccc}
				\toprule[2pt]
				&\multirow{2}{*}{$(k_1,k_2)$}&\multirow{2}{*}{Method}  &\multicolumn{3}{c}{$T=50$}&\multicolumn{3}{c}{$T=100$} \cr
				\cmidrule(lr){4-6} \cmidrule(lr){7-9}
				&&                 &$p_1=p_2=50$     &$p_1=p_2=100$       &$p_1=p_2=150$    &$p_1=p_2=50$     &$p_1=p_2=100$         &$p_1=p_2=150$ \\
				\midrule[1pt]
				&(1,1)   &$\text{STP}_{1}$     &$1.966 ( 0.014 | 0.006 )$	 &$1 ( 0 | 1 )$		 &$1 ( 0 | 1 )$	 &$1.976 ( 0.01 | 0.004 )$	&$2 ( 0 | 0 )$ &$1.05 ( 0 | 0.95 )$\\
				&      &$\text{STP}_{2}$     &$1.91 ( 0 | 0.09 )$	 &$1.002 ( 0 | 0.998 )$		 &$1 ( 0 | 1 )$	 &$1.694 ( 0 | 0.306 )$	 &$1.082 ( 0 | 0.918 )$  &$1 ( 0 | 1 )$\\
				&      &$\text{STP}_{3}$     &$1.028 ( 0 | 0.972 )$	 &$1 ( 0 | 1 )$		 &$1 ( 0 | 1 )$	 &$1 ( 0 | 1 )$	 &$1 ( 0 | 1 )$  &$1 ( 0 | 1 )$\\
				&      &IterER               &$1 ( 0 | 1 )$	 &$1 ( 0 | 1 )$  &$1 ( 0 | 1 )$	 &$1 ( 0 | 1 )$ &$1 ( 0 | 1 )$  &$1 ( 0 | 1 )$\\
				&      &$\alpha$-PCA         &$0.558 ( 0.53 | 0.382 )$	  &$0.88 ( 0.12 | 0.88 )$	 &$0.992 ( 0.008 | 0.992 )$	 &$1.144 ( 0.3 | 0.256 )$ &$0.952 ( 0.048 | 0.952 )$  &$1 ( 0 | 1 )$\\
				&      &iTIP-IC               &$1 ( 0 | 1 )$	 &$1 ( 0 | 1 )$  &$1 ( 0 | 1 )$	 &$1 ( 0 | 1 )$ &$1 ( 0 | 1 )$  &$1 ( 0 | 1 )$\\
				&      &iTIP-ER               &$1.258 ( 0 | 0.742 )$	 &$1.31 ( 0 | 0.69 )$  &$40.352 ( 0 | 0 )$	 &$1.266 ( 0 | 0.74 )$ &$1.236 ( 0 | 0.764 )$  &$1.236 ( 0 | 0.764 )$\\
				&      &TCorTh         &$2 ( 0 | 0 )$	  &$2 ( 0 | 0 )$	 &$1.55 ( 0 | 0.45 )$	 &$2 ( 0 | 0 )$ &$2 ( 0 | 0 )$  &$2 ( 0 | 0 )$\\
				&      &IC         &$1 ( 0 | 1 )$	  &$1 ( 0 | 1 )$	 &$1 ( 0 | 1 )$	 &$1 ( 0 | 1 )$ &$1 ( 0 | 1 )$  &$1 ( 0 | 1 )$\\
				\cmidrule(lr){2-9}
				&(1,3)   &$\text{STP}_{1}$     &$1.344 ( 0 | 0.656 )$	 &$1 ( 0 | 1 )$		 &$1 ( 0 | 1 )$	 &$1.294 ( 0 | 0.706 )$	 &$1.328 ( 0 | 0.672 )$ &$1 ( 0 | 1 )$\\
				&      &$\text{STP}_{2}$     &$1 ( 0 | 1 )$	 &$1 ( 0 | 1 )$		 &$1 ( 0 | 1 )$	 &$1 ( 0 | 1 )$	 &$1 ( 0 | 1 )$  &$1 ( 0 | 1 )$\\
		        &      &$\text{STP}_{3}$     &$1 ( 0 | 1 )$	 &$1 ( 0 | 1 )$		 &$1 ( 0 | 1 )$	 &$1 ( 0 | 1 )$	 &$1 ( 0 | 1 )$  &$1 ( 0 | 1 )$\\
				&      &IterER               &$1 ( 0 | 1 )$	 &$1 ( 0 | 1 )$  &$1 ( 0 | 1 )$	 &$1 ( 0 | 1 )$ &$1 ( 0 | 1 )$  &$1 ( 0 | 1 )$\\
				&      &$\alpha$-PCA         &$1 ( 0 | 1 )$	  &$1 ( 0 | 1 )$	 &$1 ( 0 | 1 )$	 &$1 ( 0 | 1 )$ &$1 ( 0 | 1 )$  &$1 ( 0 | 1 )$\\
				&      &iTIP-IC               &$1 ( 0 | 1 )$	 &$1 ( 0 | 1 )$  &$1 ( 0 | 1 )$	 &$1 ( 0 | 1 )$ &$1 ( 0 | 1 )$  &$1 ( 0 | 1 )$\\
				&      &iTIP-ER               &$1.192 ( 0 | 0.808 )$	 &$1.234 ( 0 | 0.766 )$  &$5.6 ( 0 | 0.724 )$	 &$1.084 ( 0 | 0.916 )$ &$1.076 ( 0 | 0.924 )$  &$1.104 ( 0 | 0.896 )$\\
				&      &TCorTh         &$2 ( 0 | 0 )$	  &$1.436 ( 0 | 0.564 )$	 &$1 ( 0 | 1 )$	 &$2 ( 0 | 0 )$ &$2 ( 0 | 0 )$  &$2 ( 0 | 0 )$\\
				&      &IC         &$2.78 ( 0.208 | 0.792 )$	  &$2.908 ( 0.09 | 0.91 )$	 &$2.904 ( 0.094 | 0.906 )$	 &$3 ( 0 | 1 )$ &$3 ( 0 | 1 )$  &$3 ( 0 | 1 )$\\
				\cmidrule(lr){2-9}
				&(3,1)   &$\text{STP}_{1}$     &$3.13 ( 0.002 | 0.866 )$	 &$3 ( 0 | 1 )$		 &$3 ( 0 | 1 )$	 &$3.086 ( 0.002 | 0.91 )$	&$3.182 ( 0 | 0.818 )$ &$3 ( 0 | 1  )$\\
				&      &$\text{STP}_{2}$     &$3.002 ( 0 | 0.998 )$	 &$3 ( 0 | 1 )$		 &$3 ( 0 | 1 )$	 &$3 ( 0 | 1 )$	 &$3 ( 0 | 1 )$  &$3 ( 0 | 1 )$\\
				&      &$\text{STP}_{3}$     &$3 ( 0 | 1 )$	 &$3 ( 0 | 1 )$		 &$3 ( 0 | 1 )$	 &$3 ( 0 | 1 )$	 &$3 ( 0 | 1 )$  &$3 ( 0 | 1 )$\\
				&      &IterER               &$3 ( 0 | 1 )$	 &$3 ( 0 | 1 )$  &$3 ( 0 | 1 )$	 &$3 ( 0 | 1 )$ &$3 ( 0 | 1 )$  &$3 ( 0 | 1 )$\\
				&      &$\alpha$-PCA         &$3.38 ( 0.142 | 0.072 )$	  &$3.054 ( 0.028 | 0.858 )$	 &$3 ( 0 | 1 )$	 &$3.768 ( 0.044 | 0.056 )$ &$3.084 ( 0.004 | 0.9 )$  &$3 ( 0 | 1 )$\\
				&      &iTIP-IC               &$1 ( 1 | 0 )$	 &$1 ( 1 | 0 )$  &$1 ( 1 | 0 )$	 &$1 ( 1 | 0 )$ &$1 ( 1 | 0 )$  &$1 ( 1 | 0 )$\\
				&      &iTIP-ER               &$2.556 ( 0.434 | 0.42 )$	 &$2.778 ( 0.34 | 0.492 )$  &$44.404 ( 0.02 | 0.058 )$	 &$2.572 ( 0.45 | 0.386 )$ &$2.818 ( 0.316 | 0.496 )$  &$2.87 ( 0.268 | 0.57 )$\\
				&      &TCorTh         &$3.844 ( 0 | 0.156 )$	  &$3.052 ( 0 | 0.948 )$	 &$3 ( 0 | 1 )$	 &$3.992 ( 0 | 0.008 )$ &$3.998 ( 0 | 0.002 )$  &$3.784 ( 0 | 0.216 )$\\
				&      &IC         &$2.776 ( 0.21 | 0.79 )$	  &$2.866 ( 0.132 | 0.868 )$	 &$2.92 ( 0.078 | 0.922 )$	 &$3 ( 0 | 1 )$ &$3 ( 0 | 1 )$  &$3 ( 0 | 1 )$\\
				\cmidrule(lr){2-9}
				&(3,3)   &$\text{STP}_{1}$     &$3 ( 0 | 1 )$	 &$3 ( 0 | 1 )$		 &$3 ( 0 | 1 )$	 &$3 ( 0 | 1 )$	 &$3 ( 0 | 1 )$		 &$3 ( 0 | 1 )$\\
				&      &$\text{STP}_{2}$     &$3 ( 0 | 1 )$	 &$3 ( 0 | 1 )$		 &$3 ( 0 | 1 )$	 &$3 ( 0 | 1 )$	 &$3 ( 0 | 1 )$		 &$3 ( 0 | 1 )$\\
			    &      &$\text{STP}_{3}$     &$3 ( 0 | 1 )$	 &$3 ( 0 | 1 )$		 &$3 ( 0 | 1 )$	 &$3 ( 0 | 1 )$	 &$3 ( 0 | 1 )$		 &$3 ( 0 | 1 )$\\
				&      &IterER               &$3 ( 0 | 1 )$	 &$3 ( 0 | 1 )$  &$3 ( 0 | 1 )$	 &$3 ( 0 | 1 )$	 &$3 ( 0 | 1 )$  &$3 ( 0 | 1 )$\\
				&      &$\alpha$-PCA         &$2.998 ( 0.002 | 0.998 )$	  &$3 ( 0 | 1 )$	 &$3 ( 0 | 1 )$	 &$2.998 ( 0.002 | 0.998 )$	  &$3 ( 0 | 1 )$	 &$3 ( 0 | 1 )$\\
				&      &iTIP-IC               &$1 ( 1 | 0 )$	 &$1 ( 1 | 0 )$  &$1 ( 1 | 0 )$	 &$1 ( 1 | 0 )$	 &$1 ( 1 | 0 )$  &$1 ( 1 | 0 )$\\
				&      &iTIP-ER               &$2.61 ( 0.408 | 0.456 )$	 &$2.806 ( 0.282 | 0.578 )$  &$5.514 ( 0.254 | 0.54 )$	 &$2.728 ( 0.284 | 0.632 )$	 &$2.824 ( 0.228 | 0.676 )$  &$2.906 ( 0.166 | 0.744 )$\\
				&      &TCorTh         &$3.002 ( 0 | 0.998 )$	  &$3 ( 0 | 1 )$	 &$3 ( 0 | 1 )$	 &$3.102 ( 0 | 0.898 )$	  &$3 ( 0 | 1 )$	 &$3 ( 0 | 1 )$\\
				&      &IC         &$7.192 ( 0.888 | 0.112 )$	  &$7.832 ( 0.712 | 0.288 )$	 &$7.896 ( 0.74 | 0.26 )$	 &$8.958 ( 0.038 | 0.962 )$	  &$8.998 ( 0.002 | 0.998 )$	 &$9 ( 0 | 1 )$\\
				\bottomrule[2pt]
		\end{tabular*}}
	\end{threeparttable}
\end{table}

\vspace{1em}

Results are in Tables \ref{tab:main2} and \ref{tab:main21}, from which we
can draw the following three conclusions.

First, especially for the case $k_{1}=k_{2}=1$, and especially when $p_{2}$
is small, the $\text{STP}_{2}$ and STP$_{3}$ procedures dominate the IterER
and the $\alpha $-PCA procedures, which have a pronounced tendency to
understate the number of common factors - thus, in this case, mistakenly
finding no evidence of a row factor structure and, consequently, mistakenly
indicating a vector, as opposed to a matrix, factor model. This is not true
for the iTIP-IC and the iTIP-ER procedures, which always correctly estimate $%
k_{1}$ as equal to $1$; however, these procedures (especially iTIP-IC) are
less able to determine the presence of further common factors when $k_{1}=3$%
. Note further that, by construction, they are initialised at $k_{1}=1$, so
they cannot understate $k_{1}$ but are unable to check whether $k_{1}=0$
(although, in principle, it is possible to extend this method to check if
there are $k_{1}=0$ factors by constructing an artificial eigenvalue as in %
\citealp{ahnhorenstein13}). In general, the $\text{STP}_{2}$ procedure (and,
to a lesser extent, the STP$_{3}$ one) dominates over all other procedures
in almost all cases considered, which makes a very strong case to consider
the preliminary step of projecting the data $X_{t}$ onto $C$ prior to
undertaking any analysis. We also note that the $\text{STP}_{1}$ method
performs comparably with the $\alpha $-PCA method, but it is inferior to the
IterER method, albeit with some exceptions - e.g. when $k_{1}=k_{2}=1$, and $%
p_{2}=15$. Results, as Table \ref{tab:main21} demonstrates, improve as the
sample sizes increase, and become comparable with those obtained with other
criteria. The results in Table \ref{tab:main21} also contain the case of
smaller $p_{1}=50$: in such a case, it is evident that reducing $p_{1}$
worsens the overall ability of our procedures, which can be explained by
noting that a lower $p_{1}$ corresponds to the spiked eigenvalues diverging
at a slower rate.

Second, all procedures seem to improve as $k_{2}$ increases, as can be
anticipated in the light of (\ref{stn}). In such cases, the $\text{STP}_{2}$
and STP$_{3}$\ procedures still retain their advantage especially for small
values of $p_{2}$, but less evidently than in the previous cases.

Third, confirming what is also found in the previous section, the $\text{STP}%
_{2}$ and STP$_{3}$ methodologies always outperform the $\text{STP}_{1}$
one. Indeed, the $\text{STP}_{1}$ procedure works well for large sample
sizes, but it is dominated even by the IterER methodology in small samples
(with few exceptions), and, to a lesser extent, by the $\alpha $-PCA one.
This suggests that the gains observed with $\text{STP}_{2}$ arise from two
equally important sources: the use of the projection method proposed by %
\citet{hkyz2021}, and the use of our randomised tests \textit{cum} the
decision rule advocated in (\ref{thresholds}).

Finally, we note that we have run further experiments in the Supplementary
Material to assess the robustness of our procedures. In Table \ref{tab:r3q72}%
, we consider the case of $p_{1}=p_{2}$ and \textquotedblleft
small\textquotedblright , obtaining, broadly, the same results as above; the
case of weak factors is studied in Tables \ref{tab:weakf1} and \ref%
{tab:weakf2}; finally, in Table \ref{tab:k2zero}, we analyse the case in
which $k_{2}=0$, but the applied user mistakenly uses $\widetilde{k}_{1}$.
We also investigate the robustness of our procedure to its specifications.
In Table \ref{weight-choice}, we assess the robustness to different
specifications to the weight function $F\left( u\right) $ required in Step 4
of our randomisation algorithm, and in Table \ref{tab:supp4}, we assess the
impact of the threshold function $f\left( S\right) $ for the case $%
(k_{1},k_{2})=(3,3)$. The broad conclusion, even in this case, is that our
procedure is not affected by these specifications, reinforcing the message
that although some specifications need to be chosen by the researcher, the
impact thereof is negligible. Finally, in Tables \ref{tab:CT1}-\ref{tab:CT3}
we report the computational times of our procedures, comparing them against
those of the other criteria considered in the above.

\section{Empirical studies\label{empirical}}

We illustrate our procedure through two applications: we firstly present an
application to a set of macroeconomic indices (Section \ref{mindex}), and
then consider a 2-D image recognition dataset (Section \ref{mnist}).

\subsection{Multinational macroeconomic indices\label{mindex}}

Inspired by \citet{fan2021}, we investigate the presence (and dimension) of
a matrix factor structure in a time series of macroeconomic indices. In our
application, we use the dataset employed by \cite{hkyz2021}, containing
records of $p_{2}=10$ macroeconomic indices across $p_{1}=8$ OECD\ countries
over $T=130$ quarters, ranging from $1988Q1$ to $2020Q2$. Whilst we refer to
\cite{hkyz2021} for details, the countries are the United States, the United
Kingdom, Canada, France, Germany, Norway, Australia and New Zealand, which
can be naturally divided into three groups as North American, European and
Oceania based on their geographical locations. The indices are from $4$
major groups, namely consumer price, interest rate, production, and
international trade.\footnote{%
In particular, we have considered the following indices, grouped by family:
productivity (comprising: Total Index excluding Construction, Total
Manifacturing, and GDP), CPI (comprising Food, Energy, and CPI Total),
interest rates (long-term government bond yields, and 3-month Interbank
rates and yields), and international trade (comprising total exports and
total imports, both measured by value).} As in \cite{hkyz2021}, we use the
log-differences of each index, and each series is standardised.

\bigskip

We begin with testing whether there exists a matrix factor structure in the
data. Results are in Tables \ref{tab:test1} and \ref{tab:test2}; as also
found in \citet{fan2021}, there is overwhelming evidence in favour of a
matrix structure in the data, for all test specifications considered, which
corresponds to not rejecting the null hypotheses that $k_{1},k_{2}\geq 1$.

\begin{table}[htbp]
\caption{Testing the null hypothesis $H_{01}:k_1\ge 1$ for the macroeconomic
indices data set. Tests are based on $\protect\alpha=0.01, M=100$, using the
following thresholds: $f_1(S)=1-\protect\alpha-\protect\sqrt{2\ln S/S}$, $%
f_2(S)=1-\protect\alpha-S^{-1/3}$, $f_3(S)=1-\protect\alpha-S^{-1/4}$, $%
f_4(S)=1-\protect\alpha-S^{-1/5}$, $f_5(S)=(1-\protect\alpha)/2$.}
\label{tab:test1}\renewcommand{\arraystretch}{1.5} \centering
\selectfont
\begin{threeparttable}
		 \scalebox{0.9}{\begin{tabular*}{18cm}{ccccccccccccccccccccccccccccc}
				\toprule[2pt]
				 &&\multicolumn{5}{c}{No-projection}&\multicolumn{5}{c}{Projection}\\\cmidrule(lr){2-6}\cmidrule(lr){7-11}
				$S$&$f_1(S)$&	$f_2(S)$&	 $f_3(S)$&	$f_4(S)$&	$f_5(S)$&$f_1(S)$&	 $f_2(S)$&	$f_3(S)$&	$f_4(S)$&	$f_5(S)$\\	 \midrule
200&Accept&Accept&Accept&Accept&Accept&Accept&Accept&Accept&Accept&Accept\\
300&Accept&Accept&Accept&Accept&Accept&Accept&Accept&Accept&Accept&Accept\\
400&Accept&Accept&Accept&Accept&Accept&Accept&Accept&Accept&Accept&Accept\\
				\bottomrule[2pt]
		\end{tabular*}}
	\end{threeparttable}
\end{table}
\begin{table}[!h]
\caption{Testing the null hypothesis $H_{01}:k_2\ge 1$ for the macroeconomic
indices data set. Tests are based on $\protect\alpha=0.01, M=100$, using the
following thresholds: $f_1(S)=1-\protect\alpha-\protect\sqrt{2\ln S/S}$, $%
f_2(S)=1-\protect\alpha-S^{-1/3}$, $f_3(S)=1-\protect\alpha-S^{-1/4}$, $%
f_4(S)=1-\protect\alpha-S^{-1/5}$, $f_5(S)=(1-\protect\alpha)/2$.}
\label{tab:test2}\renewcommand{\arraystretch}{1.5} \centering
\selectfont
\begin{threeparttable}
		 \scalebox{0.9}{\begin{tabular*}{18cm}{ccccccccccccccccccccccccccccc}
				\toprule[2pt]
				 &&\multicolumn{5}{c}{No-projection}&\multicolumn{5}{c}{Projection}\\\cmidrule(lr){2-6}\cmidrule(lr){7-11}
				$S$&$f_1(S)$&	$f_2(S)$&	 $f_3(S)$&	$f_4(S)$&	$f_5(S)$&$f_1(S)$&	 $f_2(S)$&	$f_3(S)$&	$f_4(S)$&	$f_5(S)$\\	 \midrule
200&Accept&Accept&Accept&Accept&Accept&Accept&Accept&Accept&Accept&Accept\\
300&Accept&Accept&Accept&Accept&Accept&Accept&Accept&Accept&Accept&Accept\\
400&Accept&Accept&Accept&Accept&Accept&Accept&Accept&Accept&Accept&Accept\\
				\bottomrule[2pt]
		\end{tabular*}}
	\end{threeparttable}
\end{table}

We now turn to determining the dimensions of the row and column factor
spaces $k_{1}$ and $k_{2}$. The empirical exercise in \citet{fan2021}
demonstrates that, possibly owing to the small cross-sectional sizes, the
estimated number of common factors differs considerably depending on the
estimation method employed. Table \ref{tab:mac1} reports the estimated
numbers of common factors in the form of $(a,b)$ where $a$ denotes the
number of common row factors, and $b$ the number of column factors. Using
the results in Section \ref{simulation} as guidelines (see in particular
Section \ref{diffp1p2}), we have used $M=150$, $S=200$ and $\alpha =0.01$;
by way of robustness check, we have also considered different values of $%
\alpha $, noting that, as $\alpha $ increases, our proposed procedure is
more and more in favor of rejecting the existence of factors, and we have
changed $M$ to $M=250$, obtaining the same results as reported here.%
\footnote{%
Unreported results show that, for $\alpha \geq 0.05$, one would find $%
k_{1}=1 $ and $k_{2}=0$, thus rejecting a matrix factor structure
altogether. This reinforces the findings in the previous section, where it
was noted that our procedure, in small samples, requires a smaller $\alpha $
in order to estimate the factor dimensions correctly.} Here, we report
results using different combinations of $S$ and $f\left( S\right) $ to shed
further light on the impact of these specifications; in particular, we use:
the thresholds employed also in Section \ref{simulation} - i.e. $f\left(
S\right) =S^{-a}$ with $a={1}/{3}$, ${1}/{4}$ and ${1}/{5}$; a conservative
threshold, $f\left( S\right) =\sqrt{2\ln S/S}$; and a very \textquotedblleft
liberal\textquotedblright\ one, with $f\left( S\right) =\left( 1-\alpha
\right) /2$.

\bigskip

\begin{table}[htbp]
\caption{Estimated numbers of row and column factors for the macroeconomic
indices data set. Five ways to select the threshold: $f_1(S)=1-\protect\alpha%
-\protect\sqrt{2\ln S/S}$, $f_2(S)=1-\protect\alpha-S^{-1/3}$, $f_3(S)=1-%
\protect\alpha-S^{-1/4}$, $f_4(S)=1-\protect\alpha-S^{-1/5}$, $f_5(S)=(1-%
\protect\alpha)/2$. }
\label{tab:mac1}\renewcommand{\arraystretch}{1.5} \centering
\selectfont
\begin{threeparttable}
		\scalebox{1}{\begin{tabular*}{7.8cm}{cccccc}
\toprule[2pt] & \multicolumn{5}{c}{No-projection} \\
 $S$ & $f_{1}(S)$ & $f_{2}(S)$ & $%
f_{3}(S)$ & $f_{4}(S)$ & $f_{5}(S)$ \\
\midrule200 & (1,4) & (1,4) & (1,4) & (1,4) & (2,4) \\
300 & (1,3) & (1,0) & (1,4) & (1,4) & (2,4) \\
400 & (1,3) & (1,3) & (1,4) & (1,4) & (2,4) \\ \hline
& \multicolumn{5}{c}{Projection} \\
$S$ & $f_{1}(S)$ & $f_{2}(S)$ & $f_{3}(S)$ & $f_{4}(S)$ & $f_{5}(S)$ \\
\midrule200  & (1,4) & (1,3) & (1,4) & (2,4) & (2,4) \\
300 & (1,3) & (1,3) & (1,3) & (1,4) & (2,4) \\
400 & (1,3) & (1,3) & (1,3) & (1,4) & (2,4) \\
\bottomrule[2pt] &  &  &  &  &
\end{tabular*}}
	\end{threeparttable}
\end{table}

\bigskip

Results are only partly affected by the choice of $S$ and $f\left( S\right) $%
, which play a very minor role (a desirable form of robustness). As pointed
out in Section \ref{simulation}, the projection technique should work better
in finite samples, but in our application results are actually comparable
between the two techniques. According to Table \ref{tab:mac1}, the number of
row factors is at most $k_{1}=2$: whilst there is strong evidence in favour
of at least one common factor (thus confirming that there is a matrix factor
structure, as also found by \citet{fan2021} using the eigenvalue ratio
approach), the second factor seems to be weaker, and deciding whether $%
\widehat{k}_{1}=1$ or $2$ can be done on account of the researcher's
preference for (possible) underestimation versus overestimation. Reading
these results in conjunction with Table 9 in \cite{hkyz2021} would suggest
choosing $\widehat{k}_{1}=2$: factors broadly represent the different
geographical locations, but European countries (particularly the largest
economy, Germany) seem to also share a common factor structure with North
America, speaking to the integration between the two economic areas. As far
as $\widehat{k}_{2}$ is concerned, using the majority vote when applying the
projection technique suggests $\widehat{k}_{2}=4$; even in this case, there
seems to be some evidence in favour of $\widehat{k}_{2}=3$ also, again
suggesting that, possibly, the fourth common factor is weaker than the
others. Interestingly, the results in \citet{fan2021} using two different
techniques (respectively, a scree-plot and an eigenvalue ratio approach)
indicate that $k_{2}$ may range between $2$ and $6$, so our proposed
approach offers a considerable refinement; \cite{hkyz2021} also find $%
\widehat{k}_{2}=4$ or $5$, but their results with $\widehat{k}_{2}=4$ show
that this estimate explains the data very well, and it matches the four
groups to which the indices belong which is an intuitive and meaningful
finding.

Finally, by way of comparison we report the estimated values of the number
of row and column common factors using various techniques available in the
literature; results are in Table \ref{tab:commacro}. There is broad
consensus across all techniques as far as $\widehat{k}_{1}$ is concerned -
one common factor is found by virtually all criteria, with the exception of
the iTIP-ER criterion which indicates $\widehat{k}_{1}=2$. This is in line
with our estimates, which (as mentioned above) suggest the presence of one
strong common factor, and also the possible presence of a (weaker) second
common factor. Conversely, there seems to be less consensus when estimating $%
k_{2}$. All criteria indicate a small value of $\widehat{k}_{2}$, which,
also in the light of the empirical exercise in \cite{hkyz2021}, seems to be
an understatement of the true number of common factors. As mentioned above,
the $\alpha $-PCA\ criterion delivers very different values of $\widehat{k}%
_{2}$ depending on the value of $\alpha $ (in our case, we have used $\alpha
=0$ as in the Monte Carlo exercise); the iterative procedures by %
\citet{han2020rank} indicate that $\widehat{k}_{2}=1$ or $2$, thus
confirming the findings in Table \ref{tab:r3q72} which suggest a tendency to
understate the number of common factors in small samples when this is larger
than one. The criterion by \citet{lam2021rank}, on the other hand, is the
one closest to our findings, indicating that $\widehat{k}_{2}=3$.

\begin{table}[htbp]
\caption{Estimated numbers of row and column factors (or total number for
vectorized methods) using different approaches in the literature, for the
macroeconomic indices data set. $k_{\max}$ is set as $6$ for
matrix-factor-model based approaches and $6^2$ for vectorized methods. }
\label{tab:commacro}\renewcommand{\arraystretch}{1.5} \centering
\selectfont
\begin{threeparttable}
		\scalebox{1}{\begin{tabular*}{16.5cm}{ccccccccccc}
				\toprule[2pt]
				&&&\multicolumn{4}{c}{tensorTS} &&\multicolumn{2}{c}{Vectorized} \\\cmidrule(lr){4-7}\cmidrule(lr){9-10}
				& IterER &$\alpha$-PCA&iTIP-IC&iTOP-IC&iTIP-ER&iTOP-ER &TCorTh & ER& IC\\\midrule[1.2pt]
				$\hat k_1$&1&1&1&1&2&1&1&\multirow{2}{*}{1}&\multirow{2}{*}{2}
\\
				$\hat k_2$&5&2&1&1&2&2&3&&\\
				\bottomrule[2pt]
		\end{tabular*}}
	\end{threeparttable}
\end{table}

\begin{comment}

We have tried different combinations of $S$, $\alpha $ and threshold $f(S)$ while $M=100$. It's seen that the results are most sensitive to $%
\alpha $, while the effects of $S$ and the thresholding rules are less
important. When both $\alpha $ is large, our procedure is more in favour of
rejecting the existence of factors so that most of the results are $(1,0)$.
This is partially due to the small $(p_{1},p_{2})$ setting for this real
example, which supplements to our simulation studies. Hinted by the
simulation results, we prefer to accept the results with $\alpha =0.01$,
then at most we find 2 factors for the countries and 4 factors for the
macroeconomic indices. This result is consistent with that in \citet{hkyz2021}, and roughly matches with the group numbers of both
dimensions. Overall, we suggest taking $k_{1}=2$ (or $k_{1}=3$) and $k_{2}=4$
in case of underestimation since the second and third factors seem to be
\textquotedblleft weak\textquotedblright . Recall that non-zero
estimates of the factor numbers are equivalent to accepting the existence of
common factors. Therefore, this study rationales the matrix-factor-structure
assumption in \citet{hkyz2021}.
\end{comment}%
\begin{comment}
\begin{table}[htbp]
\caption{Estimated numbers of row and column factors for macroeconomic
indices data set. Five ways to select the threshold: $f_1(S)=1-\protect\alpha%
-\protect\sqrt{2\ln S/S}$, $f_2(S)=1-\protect\alpha-S^{-1/3}$, $f_3(S)=1-%
\protect\alpha-S^{-1/4}$, $f_4(S)=1-\protect\alpha-S^{-1/5}$, $f_5(S)=(1-%
\protect\alpha)/2$. }
\label{tab:mac}\renewcommand{\arraystretch}{1.5} \centering
\selectfont
\begin{threeparttable}
		 \scalebox{1}{\begin{tabular*}{16cm}{ccccccccccccccccccccccccccccc}
				\toprule[2pt]
				 &&\multicolumn{5}{c}{No-projection}&\multicolumn{5}{c}{Projection}\\\cmidrule(lr){3-7}\cmidrule(lr){8-12}
		$S$&$\alpha$&$f_1(S)$&	$f_2(S)$&	 $f_3(S)$&	$f_4(S)$&	$f_5(S)$&$f_1(S)$&	 $f_2(S)$&	$f_3(S)$&	$f_4(S)$&	$f_5(S)$\\	 \midrule
200&0.01&(1,4)&(1,4)&(1,4)&(1,4)&(2,4)&(1,4)&(1,3)&(1,4)&(2,4)&(2,4)
\\
300&0.01&(1,3)&(1,0)&(1,4)&(1,4)&(2,4)&(1,3)&(1,3)&(1,3)&(1,4)&(2,4)
\\
400&0.01&(1,3)&(1,3)&(1,4)&(1,4)&(2,4)&(1,3)&(1,3)&(1,4)&(1,4)&(2,4)
\\\hline
200&0.05&(1,0)&(1,0)&(1,0)&(1,0)&(1,0)&(1,0)&(1,0)&(1,0)&(1,0)&(1,0)
\\
300&0.05&(1,0)&(1,0)&(1,0)&(1,0)&(1,0)&(1,0)&(1,0)&(1,0)&(1,0)&(1,0)
\\
400&0.05&(1,0)&(1,0)&(1,0)&(1,0)&(1,0)&(1,0)&(1,0)&(1,0)&(1,0)&(1,0)
\\\hline
200&0.1&(1,0)&(0,0)&(1,0)&(1,0)&(1,0)&(1,0)&(1,0)&(1,0)&(1,0)&(1,0)
\\
300&0.1&(1,0)&(0,0)&(1,0)&(1,0)&(1,0)&(1,0)&(0,0)&(1,0)&(1,0)&(1,0)
\\
400&0.1&(0,0)&(0,0)&(1,0)&(1,0)&(1,0)&(0,0)&(0,0)&(1,0)&(1,0)&(1,0)\\
				\bottomrule[2pt]
		\end{tabular*}}
	\end{threeparttable}
\end{table}
\end{comment}

\subsection{MNIST: handwritten digit numbers\label{mnist}}

In our second example, we apply matrix time series to an image recognition
dataset, namely the Modified National Institute of Standards and Technology
(MNIST) dataset, which has been analysed in numerous applications of
classification algorithms and machine learning, and which consists of images
of handwritten digit numbers from $0$ to $9$. As is typical in these
applications, each single (gray-scale) image represents the matrix $X_{t}$,
whose elements are the pixels of the image. We only use the training set,
which contains $T=10,000$ images; in our dataset, the digits have been
size-normalized and centered in a fixed-size image with $28\times 28$
pixels, thus having $p_{1}=p_{2}=28$. We standardize the pixels at each
location.

\bigskip

The estimated numbers of row and column factors are reported in Table \ref%
{tab:mnist} with multiple combinations of $\alpha $, $S$ and $f\left(
S\right) $, as in the previous section. Since $p_{1}$ and $p_{2}$ are larger
than those in our previous example, we use $M=200$ in the testing (we tried $%
M=100$ and $300$ and results are essentially the same).

\begin{table}[htbp]
\caption{Estimated numbers of row and column factors for handwritten digit
number data set. Five ways to select the threshold. Q1: $1-\protect\alpha-%
\protect\sqrt{2\ln\ln S/S}$; Q2: $1-\protect\alpha-S^{-1/3}$; Q3: $1-\protect%
\alpha-S^{-1/4}$; Q4: $1-\protect\alpha-S^{-1/5}$; Q5: $(1-\protect\alpha)/2$%
. }
\label{tab:mnist}\renewcommand{\arraystretch}{1.5} \centering
\selectfont
\begin{threeparttable}
		 \scalebox{1}{\begin{tabular*}{16.5cm}{ccccccccccccccccccccccccccccc}
				\toprule[2pt]
				 &&\multicolumn{5}{c}{No-projection}&\multicolumn{5}{c}{Projection}\\\cmidrule(lr){3-7}\cmidrule(lr){8-12}
		$S$&$\alpha$&$f_1(S)$&	$f_2(S)$&	 $f_3(S)$&	$f_4(S)$&	$f_5(S)$&$f_1(S)$&	 $f_2(S)$&	$f_3(S)$&	$f_4(S)$&	$f_5(S)$\\	 \midrule
200&0.01&(0,3)&(0,3)&(0,3)&(0,4)&(4,5)&(4,5)&(4,5)&(4,5)&(4,5)&(4,5)
\\
300&0.01&(0,3)&(0,3)&(0,3)&(0,4)&(4,5)&(4,5)&(4,4)&(4,5)&(4,5)&(4,5)
\\
400&0.01&(0,3)&(0,3)&(0,3)&(0,3)&(4,5)&(4,5)&(4,5)&(4,5)&(4,5)&(4,5)
\\\hline
200&0.05&(0,0)&(0,0)&(0,0)&(0,1)&(0,1)&(4,3)&(4,3)&(4,3)&(4,3)&(4,3)
\\
300&0.05&(0,0)&(0,0)&(0,0)&(0,1)&(0,1)&(4,3)&(4,3)&(4,3)&(4,3)&(4,3)
\\
400&0.05&(0,0)&(0,0)&(0,1)&(0,1)&(0,1)&(4,3)&(4,3)&(4,3)&(4,3)&(4,3)
\\\hline
200&0.1&(0,0)&(0,0)&(0,0)&(0,0)&(0,0)&(4,3)&(0,3)&(4,3)&(4,3)&(4,3)
\\
300&0.1&(0,0)&(0,0)&(0,0)&(0,0)&(0,0)&(0,3)&(0,3)&(4,3)&(4,3)&(4,3)
\\
400&0.1&(0,0)&(0,0)&(0,0)&(0,0)&(0,0)&(0,3)&(0,3)&(0,3)&(4,3)&(4,3)\\
				\bottomrule[2pt]
		\end{tabular*}}
	\end{threeparttable}
\end{table}

Results and conclusions are similar to those in the previous section. In
particular, a bigger difference emerges in the performance of projection
versus non-projection based estimation, with the former offering a
performance which is more robust across the different specifications. In
light also of the results in Section \ref{simulation}, the findings in this
section strengthen the case in favour of the projection-based estimator. We
note that, when using this technique, the number of row factors is almost
always (save for some exceptions, based on a large $\alpha $ and a high
threshold $f\left( S\right) $) estimated as $\widehat{k}_{1}=4$. As far as $%
k_{2}$ is concerned, all results indicate that this is not smaller than $3$,
and the most conservative approach (based on using $\alpha =0.01$) indicates
the possibility of having $\widehat k_{2}=5$. This may suggest that two
common factors are less pervasive than the others. In order to avoid
underestimation, we recommend taking $\widehat{k}_{1}=4$ and $\widehat{k}%
_{2}=5$ in this example. Indeed, in any real applications, we suggest the
readers to try different combinations of $\alpha $ and threshold, and select
the numbers of factors based on the real tolerance of underestimation and
overestimation errors. Smaller $\alpha $ and threshold are in favour of $%
H_{0} $, but in higher risk of overestimation. Larger $\alpha $ and
threshold will lead to opposite results.

For this example, we further compare the results for different digit numbers
in Table \ref{tab:compare} using only a small part of the images associated
with a specific number. In this table, we report results corresponding to $%
S=400$, $\alpha =0.01$ and $f_{5}(S)$; we point out however that using
different specifications leaves the results virtually unchanged. Results are
remarkably stable across the different digits.

\begin{table}[htbp]
\caption{Estimated numbers of row and column factors for different digit
numbers. }
\label{tab:compare}\renewcommand{\arraystretch}{1.5} \centering
\selectfont
\begin{threeparttable}
		 \scalebox{1}{\begin{tabular*}{15.5cm}{ccccccccccccccccccccccccccccc}
				\toprule[2pt]
				Projection& ``0''& ``1''& ``2''& ``3''& ``4''& ``5''& ``6''& ``7''& ``8''& ``9''\\\midrule
No&(4,5)&(4,5)&(4,5)&(4,5)&(4,5)&(4,5)&(4,5)&(4,5)&(4,5)&(4,4)
\\
Yes&(4,5)&(4,5)&(4,5)&(4,5)&(4,5)&(4,5)&(4,5)&(4,5)&(4,5)&(4,5)\\
				\bottomrule[2pt]
		\end{tabular*}}
	\end{threeparttable}
\end{table}

Finally, similarly to the previous application, we compare our results
against those obtained using alternative criteria. The results in Table \ref%
{tab:comdigit} show that results are broadly similar across the various
techniques, in a more evident way than in the case of the previous exercise.
In particular, all criteria indicate $\widehat{k}_{1}\geq 4$. The iterative
procedures by \citet{han2020rank} show the same pattern as before, with
Information Criteria having a tendency to estimate a larger number of common
factors than the Eigenvalue Ratio statistic. Interestingly, the criteria
proposed by \citet{lam2021rank} seem to overstate the number of common
factors - this is particularly evident when comparing $\widehat{k}_{1}$,
which the majority of criteria finds to be equal to $4$, and it is found to
be equal to $8$ using the estimator by \citet{lam2021rank}. As far as $%
\widehat{k}_{2}$ is concerned, the consensus is that $\widehat{k}_{2}\geq 3$%
, with the majority vote agreeing with our estimate that $\widehat{k}_{2}=5$.

\begin{table}[htbp]
\caption{Estimated numbers of row and column factors (or total number for
vectorized methods) using different approaches in the literature, for
different digit numbers. $k_{\max }$ is set as $10$ for matrix-factor-model
based approaches and $10^{2}$ for vectorized methods. }
\label{tab:comdigit}\renewcommand{\arraystretch}{1.5} \centering
\selectfont
\begin{threeparttable}
		\scalebox{1}{\begin{tabular*}{17cm}{ccccccccccc}
				\toprule[2pt]
				&&&\multicolumn{4}{c}{tensorTS} &&\multicolumn{2}{c}{Vectorized} \\\cmidrule(lr){4-7}\cmidrule(lr){9-10}
				& IterER &$\alpha$-PCA&iTIP-IC&iTOP-IC&iTIP-ER&iTOP-ER &TCorTh & ER& IC\\\midrule[1.2pt]
				$\hat k_1$&4&4&6&6&4&4&8&\multirow{2}{*}{4}&\multirow{2}{*}{12}\\
				$\hat k_2$&5&1&5&6&3&5&6&&\\
				\bottomrule[2pt]
		\end{tabular*}}
	\end{threeparttable}
\end{table}

\section{Discussion and conclusions\label{conclusion}}

In this contribution, we studied the important issue of determining the
presence and dimension of the row and column factor structures in a series
of matrix-valued data exhibiting a Kronecker product structure in the
loadings. Our methodology allows to check whether there is a factor
structure in either dimension (row and column), thus helping the researcher
decide whether data should be studied using the techniques developed by the
literature for a standard vector factor model, or whether different
techniques should be employed that are specific to tensor-valued data. In
addition to finding evidence of a factor structure, we also proposed a
methodology to estimate the numbers of common row and column factors.

Technically, our methodology is based on exploiting the eigen-gap which is
found, in the presence of common factors, in the sample second moment matrix
of the series. For each eigenvalue, we propose a test for the null that it
diverges (as opposed to being bounded). Our tests are similar to the
randomised tests (designed for vector factor models) proposed in %
\citet{trapani2018randomized}. However, we substantially refine rates via a
different method of proof, and (crucially) we propose a \textquotedblleft
strong\textquotedblright , Law-of-the-Iterated-Logarithm-inspired, decision
rule which does away with the randomness, thus ensuring that all researchers
using the same datasets will obtain the same results. In our paper, we
proposed two procedures, based on two different ways of computing the sample
second moment matrix: specifically, we use a \textquotedblleft
flattened\textquotedblright\ version of the matrix-valued series, and a
projected version thereof, as proposed in \citet{hkyz2021}. We found that
both techniques work very well in large samples, but, in small samples, the
projection-based method is superior in all scenarios considered, also
outperforming other existing methods.

\bigskip

Several important issues remain outstanding. In particular, from the outset,
we have assumed that model (\ref{kron}) + (\ref{fm1}) is correct, i.e. that
the loading space has a Kronecker product structure. As we discussed in the
introduction, under this assumption the separate estimation of the loadings
matrices $C$ and $R$ is advantageous since it entails a substantial
dimensionality reduction: under (\ref{kron}), the estimation of $%
p_{1}k_{1}+p_{2}k_{2}$ coefficients is required, compared to estimating $%
\Lambda $ in (\ref{bai03}), which contains $p_{1}p_{2}k_{1}k_{2}$
coefficients. Moreover, $C$ and $R$ have a clear interpretation, and
estimating them allows to understand the interplay between the row factors
and column factors, whereas $\Lambda $ does not allow for such an
interpretation. However, all these advantages are predicated on (\ref{kron})
being correctly specified in the first place. If this is the case, it would
be possible to construct some pathological counterexamples in which (\ref%
{bai03}) is correct, whereas (\ref{kron}) is not, and - when mistakenly
using (\ref{kron}) and the techniques proposed in this paper - our tests
find $k_{1}=0$ and $k_{2}=0$ even when the dimension of the factor space in (%
\ref{bai03}), $k$, is strictly positive.\footnote{%
We are grateful to an anonymous Referee for pointing this out to us.} Whilst
this issue goes beyond the scope of the present paper, we offer a more
in-depth discussion of this issue through an example. Consider the case of $%
p_{2}<p_{1}$ and consider the following vector factor model%
\begin{equation}
\text{Vec}\left( X_{t}\right) =\Lambda F_{t}+\text{Vec}\left( E_{t}\right) .
\label{vec-kron-1}
\end{equation}%
We assume that the loadings $\Lambda $ satisfy%
\begin{equation*}
\Lambda =(\Lambda _{1}^{\prime },...,\Lambda _{p_{2}}^{\prime })^{\prime },\ %
\mbox{satisfying}\ \Lambda _{i}^{\prime }\Lambda _{j}/p_{1}=0\ i\neq j,\ %
\mbox{and}\ \Lambda _{i}^{\prime }\Lambda _{i}/p_{1}=I_{k};
\end{equation*}%
that is, the blocks $\Lambda _{1},...,\Lambda _{p_{2}}$ of $\Lambda $ are
orthonormal, and their columns span a $p_{2}$-dimensional linear space. In
this setting, it holds that $\Lambda ^{\prime }\Lambda /(p_{1}p_{2})=I_{k}$,
which satisfies the strong/pervasive factor condition for the vector factor
model. Equation (\ref{vec-kron-1}) can be rewritten artificially (i.e.,
without meaningful row and column cross-sections) in matrix form, viz.
\begin{equation}
X_{t}=(\Lambda _{1}F_{t},...,\Lambda _{p_{2}}F_{t})+E_{t},
\label{vec-kron-2}
\end{equation}%
where the errors are such that $E\left( E_{t}^{\prime }E_{t}\right) /p_{1}$
has bounded eigenvalues, as also stipulated by our Assumption \ref%
{idiosyncratic}; we assume for simplicity that $E\left( F_{t}^{\prime
}F_{t}\right) =c_{0}$. Then, by standard algebra, it follows that the column
covariance matrix is given by
\begin{equation*}
{\Sigma }_{c}=E(\Lambda _{1}{F}_{t},...,\Lambda _{p_{2}}{F}_{t})^{\prime
}(\Lambda _{1}{F}_{t},...,\Lambda _{p_{2}}{F}_{t})/p_{1}=c_{0}I_{p_{2}}.
\end{equation*}%
Thus, the second moment matrix of the signal $(\Lambda _{1}F_{t},...,\Lambda
_{p_{2}}F_{t})$ has eigenvalues of the same order of magnitude as the second
moment matrix of the idiosyncratic errors, which entails that there are no
strong (or even weak) column factors. Similarly, considering the row
covariance matrix and assuming, again for simplicity, that $E\left(
F_{t}F_{t}^{\prime }\right) =c_{1}I_{k}$, it holds that
\begin{equation*}
{\Sigma }_{r}=E(\Lambda _{1}{F}_{t},...,\Lambda _{p_{2}}{F}_{t})(\Lambda _{1}%
{F}_{t},...,\Lambda _{p_{2}}{F}_{t})^{\prime }/p_{1}=c_{1}\left(
\sum_{i=1}^{p_{2}}\Lambda _{i}\Lambda _{i}^{\prime }\right) /p_{1}.
\end{equation*}%
It is easy to see that this matrix is idempotent,\footnote{%
Indeed, it holds that%
\begin{equation*}
\left( \frac{1}{p_{1}}\sum_{i=1}^{p_{2}}\Lambda _{i}\Lambda _{i}^{\prime
}\right) \left( \frac{1}{p_{1}}\sum_{i=1}^{p_{2}}\Lambda _{i}\Lambda
_{i}^{\prime }\right) =\frac{1}{p_{1}^{2}}\sum_{i,j=1}^{p_{2}}\Lambda
_{i}\Lambda _{i}^{\prime }\Lambda _{j}\Lambda _{j}^{\prime }=\frac{1}{%
p_{1}^{2}}\sum_{i=1}^{p_{2}}\Lambda _{i}\Lambda _{i}^{\prime }\Lambda
_{i}\Lambda _{i}^{\prime }=\frac{1}{p_{1}}\sum_{i=1}^{p_{2}}\Lambda
_{i}\Lambda _{i}^{\prime };
\end{equation*}%
} and therefore its eigenvalues belong in $\{0,1\}$; hence, the row
covariance matrix $p_{1}{\Sigma }_{r}/p_{2}$ has eigenvalues $0$ and $%
p_{1}/p_{2}$. If $p_{1}/p_{2}$ is bounded, the largest eigenvalues of the
row signal matrix ${\Sigma }_{r}$ are all bounded, which again leads to
finding no strong (or even weak) row factors. Thus, we conclude that a
methodology based on assuming (\ref{kron}) incorrectly detects no common
factors, either in the columns or in the rows, despite the existence of a
factor structure in the vector factor model.

In general, as discussed in the introduction, (\ref{kron}) is likely to be
an adequate model where there is a meaningful matrix structure, with
economically meaningful row and column cross-sections; the examples
discussed in the introduction are some of the possible cases in which (\ref%
{kron}) is a natural formulation, based on the very nature of the data. This
said, as a preliminary step in the analysis, it would nonetheless be highly
desirable to have a formal test to check whether a Kronecker product
structure does exist in the loadings space $\Lambda $. We are not aware of
any such test in a high-dimensional context. A recent contribution by %
\citet{guggenb} provides a test for the null of having a Kronecker product
structure in a fixed dimensional matrix; similarly, \citet{chenvar} propose
a test for the Kronecker product structure in the context of a Vector
AutoRegressive model for a matrix-valued time series, but in that case the
row and column dimensions of the data $X_{t}$ are both fixed. The highly
nontrivial problem of developing a test for a Kronecker product structure in
the large dimensional case is currently under investigation by the authors.

\section*{Acknowledgements}

He's work is supported by NSF China (12171282,11801316), National
Statistical Scientific Research Key Project (2021LZ09), Project funded by
China Postdoctoral Science Foundation (2021M701997) and the Fundamental
Research Funds of Shandong University, Young Scholars Program of Shandong
University, China. Kong's work is partially supported by NSF China (71971118
and 11831008) and the WRJH-QNBJ Project and Qinglan Project of Jiangsu
Province. The authors would like to thank the Editor Xiaohong Chen, an
anonymous Associate Editor, and three anonymous Referees, whose helpful
comments have greatly improved the quality and focus of the paper.

\section*{Supplementary Material}

Further discussions on the assumptions, the technical proofs of the main
results and extra simulation and empirical studies are included in the
Supplementary Material.

\begin{adjustwidth}{-5pt}{-5pt}

{\footnotesize {\
\bibliographystyle{chicago}
\bibliography{LTbiblio}
} }

\end{adjustwidth}

\end{document}